\documentclass[
notitlepage,
floatfix,
aps,
prl,
reprint,
twocolumn,
groupedaddress,
nofootinbib]{revtex4-2}
\usepackage{graphicx}
\usepackage{amsmath,amsfonts,amssymb,amscd,mathtools}
\usepackage{ascmac}
\usepackage{mathrsfs}
\usepackage{amsthm}
\usepackage{dsfont}
\usepackage{enumitem}
\usepackage{mathtools}
\usepackage{braket}
\usepackage{bm}
\usepackage{verbatim}
\usepackage{framed}
\usepackage{listings}
\usepackage{tikz}
\usetikzlibrary{calc}
\usetikzlibrary{positioning}
\usetikzlibrary{arrows.meta}
\usepackage[linesnumbered,ruled,vlined]{algorithm2e}
\usepackage{booktabs}
\usepackage{multirow}

\usepackage[dvipsnames]{xcolor}
\usepackage{hyperref}
\hypersetup{
    pdfnewwindow=true,
    colorlinks=true,
    citecolor=Blue,
    linkcolor=Blue,
    urlcolor=Blue
}

\DeclareMathOperator{\Tr}{Tr}

\newcommand{\eq}[1]{Eq.~\eqref{#1}}

\newcommand{\given}{\mathrm{Given}}
\newcommand{\minimize}[1]{\underset{{#1}}{\mathrm{minimize}}}
\newcommand{\maximize}[1]{\underset{{#1}}{\mathrm{maximize}}}
\newcommand{\subto}{\text{subject to}}

\renewcommand{\ge}{\geqslant}
\renewcommand{\le}{\leqslant}

\newcommand{\ketbra}[2]{\ket{{#1}}\!\!\bra{{#2}}}

\newcommand{\cA}{\mathcal{A}}

\newcommand{\cD}{\mathcal{D}}

\newcommand{\cH}{\mathcal{H}}
\newcommand{\cI}{\mathcal{I}}

\newcommand{\cL}{\mathcal{L}}

\newcommand{\cS}{\mathcal{S}}

\newcommand{\cX}{\mathcal{X}}

\newcommand{\rA}{\mathrm{A}}
\newcommand{\rB}{\mathrm{B}}

\newcommand{\rE}{\mathrm{E}}

\newcommand{\rH}{\mathrm{H}}

\newcommand{\rS}{\mathrm{S}}

\newcommand{\sM}{\mathscr{M}}

\newcommand{\B}{\mathsf{B}}

\newcommand{\E}{\mathsf{E}}
\newcommand{\F}{\mathsf{F}}
\newcommand{\G}{\mathsf{G}}

\newcommand{\J}{\mathsf{J}}

\newcommand{\M}{\mathsf{M}}

\newcommand{\one}{\mathds{1}}
\newcommand{\zero}{\mathds{O}}

\newcommand{\mbN}{\mathbb{N}}
\newcommand{\mbF}{\mathbb{F}}

\newcommand{\mbM}{\mathbb{M}}
\newcommand{\mbE}{\mathbb{E}}

\newcommand{\guess}{\mathrm{guess}}
\newcommand{\prev}{\mathrm{prev}}
\newcommand{\new}{\mathrm{new}}

\renewcommand{\emph}[1]{{\textbf{#1}}}

\newtheorem{theorem}{Theorem}

\newtheorem{corollary}{Corollary}
\newtheorem{lemma}{Lemma}
\newtheorem{fact}{Fact}
\theoremstyle{definition}
\newtheorem{definition}{Definition}
\newtheorem{remark}{Remark}

\begin{document}
\title{All star-incompatible measurements can certify steering-based randomness}

\author{Shintaro Minagawa}
\email{minagawa.shintaro@gmail.com}
\affiliation{Aix-Marseille University, CNRS, LIS, Marseille, France}

\author{Ravi Kunjwal}
\email{quaintum.research@gmail.com}
\affiliation{Universit\'e libre de Bruxelles, QuIC, Brussels, Belgium}
\affiliation{Aix-Marseille University, CNRS, LIS, Marseille, France}

\begin{abstract}
Certifying that quantum randomness generated by untrusted devices is unpredictable to an attacker (say, Eve) is crucial for device-independent security. 
Bipartite protocols where only one of the parties is trusted are termed \textit{one-sided device-independent} (1SDI) or \textit{steering-based} protocols, where the untrusted party (say, Alice) performs measurements on her part of a bipartite entangled state to steer the subsystem of the trusted party (say, Bob) into different ensembles (collectively, an \textit{assemblage}) of quantum states. 
Recent work has shown that an assemblage has certified randomness if and only if it is realizable by \textit{a} set of measurements that are star-incompatible, \textit{i.e.}, the measurement setting of interest for the guessing probability of Eve is incompatible with at least one of the remaining measurement settings of Alice. However, it remains conceivable that there exist star-incompatible measurements that cannot certify steering-based randomness, just like there exist incompatible measurements that cannot certify bipartite Bell nonlocality.
Here we prove that \textit{any} set of star-incompatible measurements can generate steering-based randomness, thereby establishing an equivalence between the two notions. We further introduce a weight-based measure of star-incompatibility and lower bound the amount required to certify a given randomness, capturing the qualitative and quantitative interplay between the quantum resources of star-incompatibility and
steering-based randomness.
\end{abstract}

\maketitle

\textit{Introduction---}.
The intrinsic unpredictability of quantum measurement outcomes \cite{born1926zur} is at the heart of quantum randomness certification protocols \cite{masanes2006general,dhara2014can,yuan2015intrinsic,yuan2019quantum,senno2023quantifying,dai2023intrinsic,meng2024maximal,curran2025maximal}, with applications in such areas as security~\cite{herrero-collantes2017quantum}.
To circumvent the possibility that random number generators may leak information to an attacker, certifying the unpredictability of the randomness generated by such devices is necessary.
Device-independent randomness certification~\cite{colbeck2011quantum,colbeck2011private,pironio2010random,pironio2013security,acin2016certified} is a method that uses the violation of a Bell inequality~\cite{bell1964einstein,brunner2014bell,scarani2019bell} from the outputs of multiple devices to certify randomness. Due to the high implementation cost of such protocols \cite{liu2018device}, a more relaxed version called one-sided device-independent (1SDI) randomness certification was also proposed, where one party can be trusted~\cite{law2014quantum,passaro2015optimal,skrzypczyk2018maximal}.
This paper identifies a necessary and sufficient condition on quantum measurements under which randomness can be generated in the 1SDI setup and carries out a quantitative analysis of the quantum resources required for its generation.

The 1SDI setup is based on a concept that  originates from the Einstein-Podolsky-Rosen paradox~\cite{einstein1935can}, which was formalized as quantum steering~\cite{wiseman2007steering} later (for a review, see~\cite{cavalcanti2016quantum,uola2020quantum}).
Alice and Bob share a quantum state, and Alice has a device that performs several measurements in a measurement set.
The family of post-measurement states on Bob's side, conditioned on Alice's choice of a measurement setting and the measurement outcome, is known as an \textit{assemblage}.
The shared quantum state, however, may be correlated with an attacker called Eve, who has access to a quantum memory and aims to guess Alice’s outcomes. Ref.~\cite{passaro2015optimal} presented an approach to calculate Eve's guessing probability for a fixed setting via the  assemblage prepared on Bob's system.
The assemblage exhibits certified randomness whenever Eve's optimal guessing probability is strictly lower than $1$.

Recently, Ref.~\cite{li2025necessary} fully characterized the structure of assemblages with no certified randomness.
They showed that an assemblage has no certified randomness if and only if it is realizable by a set of \textit{star-compatible} (``$K_{1,m-1}(x^*)$-compatible" in \cite{li2025necessary}) measurements on Alice's side, \textit{i.e.}, the compatibility structure of Alice's set of measurements is such that the measurement targeted by Eve's guess is compatible with all other measurements in the set.
In other words, generating an assemblage that exhibits certified randomness requires a star-incompatible set of measurements. This raises some natural questions: What kind of star-incompatible set of measurements and entangled state should we use to actually generate an assemblage possessing certified randomness? Also, how can we identify star-incompatible measurements that can prepare assemblages with high certified randomness?

We show that all star-incompatible sets of measurements can generate steering-based randomness (using full Schmidt rank pure entangled states).
This establishes an equivalence between star-incompatibility and certified randomness in the 1SDI setup.
Beyond this qualitative equivalence, we also provide an algorithm to search for an entangled state that minimizes the guessing probability of the assemblage as much as possible. In addition, we provide a lower bound on the amount of star-incompatibility of Alice's measurements required to generate an assemblage with a certain level of certified randomness in the 1SDI setup.
We introduce \textit{star-incompatibility weight} as a measure of star-incompatibility, adapting the weight-based measure originally introduced for  incompatibility~\cite{pusey2015verifying}.
In quantitatively combining star-incompatibility and certified randomness from a resource-theoretic perspective~\cite{chitambar2019quantum}, we provide tools for one-sided estimation of star-incompatibility through randomness.

\textit{Preliminaries---}.
A set of linear operators from a complex Hilbert space $\cH$ to itself is denoted as $\cL(\cH)$.
The subset $\cL_\rH(\cH)\subset\cL(\cH)$ denotes all Hermitian operators.
The subset $\cL_\mathrm{PSD}(\cH)\subset\cL_\rH(\cH)$ denotes the set of all positive semidefinite operators, and for $X\in\cL_\mathrm{PSD}(\cH)$, we write $X\succeq\zero$ where $\zero$ is the zero operator.
We denote the identity operator by $\one$.

We introduce some notation based on the standard formalism of quantum theory following Refs.~\cite{nielsen2010quantum,wilde2017quantum,watrous2018theory}.
A quantum state of a quantum system $\rA$ associated with a complex Hilbert space $\cH_\rA$, whose dimension $\dim\cH_\rA$ is finite, is described by a positive semidefinite operator $\rho_\rA\in\cL_\mathrm{PSD}(\cH_\rA)$ that satisfies $\Tr\rho_\rA=1$.
Let $\cS(\cH_\rA)$ denote the set of all quantum states on system $\rA$.
We denote POVM on $\cH_\rA$ as a family of positive semidefinite operators $\M^\cA_\rA:=\{M^a_\rA\}_{a\in\cA}$, where $\cA$ is the set of outcomes, and $M^a_\rA\in\cL_{\mathrm{PSD}}(\cH_\rA)$ satisfies $\sum_{a\in\cA}M^a_\rA=\one_\rA$ hold.
Let $\sM^\cA_\rA$ denote the set of all POVMs on $\cH_\rA$ with the outcome set $\cA$.
We calculate the probability of getting an outcome $a\in\cA$ by using a formula $\Tr[M^a_\rA\rho_\rA]$.

Next, let us introduce the notion of a \textit{programmable measurement device} (PMD)~\cite{buscemi2020complete}.
It is a set of POVMs on the same Hilbert space $\cH_\rA$, $\mbM^{\cA|\cX}_\rA:=\{\M^{\cA|x}_\rA\}_{x\in\cX}$, where $\cX$ is the set of labels of POVMs, the number of elements of $\cX$ is $n:=|\cX|$, and $\M^{\cA|x}_\rA:=\{M^{a_x|x}_\rA\}_{a_x\in\cA}\in\sM^\cA_\rA$ is the $x$-th POVM.\footnote{For each setting $x$, the set of outcomes may differ. However, if necessary, we can add the zero operator $\zero_\rA$ appropriately so that the sets of outcomes are the same across all $x\in\cX$. Consequently, the set of outcomes is denoted as $\cA$, independent of $x$.} 

When quantum systems A and B are associated with complex Hilbert spaces $\cH_A$ and $\cH_B$, respectively, their composite system $\rA\rB$ is associated with a tensor product Hilbert space $\cH_\rA\otimes\cH_\rB$.
Let Alice and Bob have quantum systems $\rA$ and $\rB$, and share a quantum state $\rho_{\rA\rB}\in\cS(\cH_\rA\otimes\cH_\rB)$, and let Alice choose a measurement $\M^{\cA|x}_\rA=\{M^{a_x|x}_\rA\}_{a_x\in\cA}$ from a PMD $\mbM^{\cA|\cX}_\rA$ and perform it to get an outcome $a_x$.
Then, Bob's unnormalized post-measurement state becomes
\begin{equation}\label{eq:assemblage}
    \sigma_{\rB }^{a_x|x}=\Tr_\rA\left[\left(M^{a_x|x}_\rA\otimes\one_\rB\right)\rho_{\rA\rB}\right].
\end{equation}
The set of all Bob's unnormalized post-measurement states, $\sigma^{\cA|\cX} _\rB :=\{\sigma^{a_x|x} _\rB \}_{a_x\in\cA,x\in\cX}$, is called an \textit{assemblage}.
Let us denote an assemblage that can be realized by Eq.~\eqref{eq:assemblage} as $\sigma(\mbM^{\cA|\cX}_\rA,\rho_{\rA\rB})$.

\begin{figure}[t]
\centering
\includegraphics[width=0.8\linewidth]{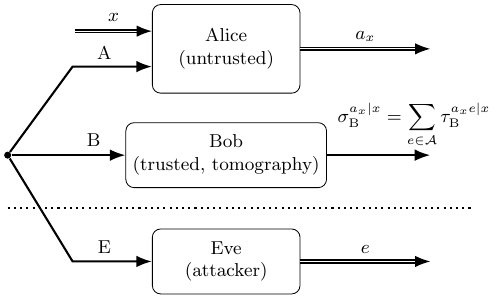}
\caption{One-sided device-independent (1SDI) randomness certification. 
The single arrows represent quantum information, and the double arrows represent classical information.
Alice chooses the $x$-th measurement from her set of measurements, and gets an outcome $a_x\in\cA$. Bob is in a trusted lab, and can fully characterize the post-measurement state by state tomography ~\cite{paris2004quantum}, obtaining an assemblage $\{\sigma^{a_x|x}_\rB\}_{a_x\in\cA,x\in\cX}$. An attacker named Eve, space-like separated from Alice and Bob, guesses Alice's measurement outcomes. The guess is successful if $e=a_x$.}
\label{fig:setup}
\end{figure}

\textit{Equivalence between star-incompatibility and certified randomness in the 1SDI setup}---.
We start by explaining the setup of 1SDI randomness certification following Ref.~\cite{passaro2015optimal}.
Suppose that Alice, Bob, and Eve have quantum systems $\rA,\rB$, and $\rE$ associated with a finite-dimensional complex Hilbert space $\cH_\rA,\cH_\rB$ and $\cH_\rE$, respectively.
They share a quantum state $\rho_{\rA\rB\rE}\in\cS(\cH_\rA\otimes\cH_\rB\otimes\cH_\rE)$, and Alice performs the $x$-th measurement $\M^{\cA|x}_\rA$ in a PMD $\mbM^{\cA|\cX}_\rA$ and gets an outcome $a_x\in\cA$.
Eve, an attacker who is separated from Alice and Bob, performs a POVM $\E^\cA_\rE:=\{E^e_\rE\}_{e\in\cA}$ to guess Alice's outcomes.
The unnormalized post-measurement state on Bob's system is given as $\tau^{a_xe|x}_\rB=\Tr_{\rA\rE}[(M^{a_x|x}_\rA\otimes\one_\rB\otimes E^e_\rE)\rho_{\rA\rB\rE}]$.
Since Eve's interventions cannot be detected by Bob in an attack, Bob's unnormalized state is given by $\sum_{e\in\cA}\tau^{a_xe|x}_\rB=\Tr_{\rA}[(M^{a_x|x}_\rA\otimes\one_\rB)\rho_{\rA\rB}]=\sigma^{a_x|x}_\rB$, where $\rho_{\rA\rB}:=\Tr_\rE\rho_{\rA\rB\rE}$.
Then, Eve's successful guessing probability for the $x^*$-th measurement's outcome is $\sum_{e\in\cA}\Tr\tau^{ee|x^*}_\rB$.
Fig.~\ref{fig:setup} depicts this setup.

Ref.~\cite{passaro2015optimal} derived a semidefinite program~\cite{boyd2004convex,skrzypczyk2023semidefinite} (in short, SDP) to calculate the guessing probability for a given assemblage $\sigma^{\cA|\cX}_\rB$ when optimizing Eve's POVM.
    \begin{align}
        \given\quad &\sigma^{\cA|\cX} _\rB :={\{\sigma^{a_x|x} _\rB \}}_{a_x\in\cA,x\in\cX},\notag\\
        p^{x^*}_{\rm guess}(\sigma^{\cA|\cX} _\rB)&:=\maximize{\tau_\rB^{a_{x}e|x}\in\cL_\rH(\cH_\rB)} \sum_{e\in\cA}\Tr\tau^{ee|x^*}_\rB\notag\\
        \subto\quad &\sum_{e\in\cA}\tau_\rB^{a_xe|x}=\sigma^{a_x|x}_\rB \; (\forall x\in\cX,\forall a_x\in\cA)\notag\\
        &\sum_{a_{x}\in\cA}\tau_\rB^{a_xe|x}=\sum_{a_{x^*}\in\cA}\tau_\rB^{a_{x^*}e|x^*}\notag \\
        &\hspace{31mm}(\forall x\in\cX,\forall e\in\cA)\notag\\
        &\tau_\rB^{a_{x}e|x}\succeq\zero_\rB\;(\forall x\in\cX,\forall a_x,e\in\cA).\label{eq:guessing_assemblage}
    \end{align}

$p_\guess^{x^*}(\sigma^{\cA|\cX}_{\rB})<1$ means that Eve cannot perfectly predict Alice's measurement outcome, \textit{i.e.}, \textit{steering-based randomness} is certified.
Note that Eq.~\eqref{eq:guessing_assemblage} does not rely on any knowledge or assumptions about the POVM performed by Alice.
It only uses Bob's assemblage information to calculate the randomness, \textit{i.e.}, in a \textit{one-sided device-independent} (1SDI) manner.

Realizing nonlocal correlations and using them to achieve certified randomness requires a set of measurements to be incompatible~\cite{heinosaari2016invitation,guhne2023incompatible}. In the case of 1SDI protocols, Ref.~\cite{li2025necessary} introduced a certain type of incompatibility of a PMD, which we term \textit{star-incompatibility} in this paper (and which was originally called $K_{1,n-1}(x^*)$-incompatibility in \cite{li2025necessary}), and identified it as a necessary resource for certified randomness. We formally define star-(in)compatibility below:
\begin{definition}
    A PMD $\mbM^{\cA|\cX}_\rA$ is said to be star-compatible with respect to setting $x^*\in \cX$ if there exist POVMs $\{\J^{\cA\times\cA|x}_\rA\}_{x\in\cX\setminus\{x^*\}}$ where $\J^{\cA\times\cA|x}_\rA:=\{J^{a_xa_{x^*}|x}_\rA\}_{a_x,a_{x^*}\in\cA}\in\sM^{\cA\times\cA}_\rA$ such that~\footnote{This paper adopts a definition using marginalization, whereas Ref.~\cite{li2025necessary} employs a definition using post-processing. However, Ref.~\cite{li2025necessary} also rewrites the definition equivalently using standard argumentation that replaces the post-processing with a deterministic process, which effectively corresponds to taking marginalization (see e.g.,~\cite{skrzypczyk2023semidefinite}).}
    \begin{align}
        M^{a_x|x}_\rA&=\sum_{a_{x^*}\in\cA}J^{a_xa_{x^*}|x}_\rA\;(\forall x\in\cX\setminus\{x^*\},\;\forall a_x\in\cA),\\
        M^{a_{x^*}|x^*}_\rA &=\sum_{a_x\in\cA}J^{a_xa_{x^*}|x}_\rA\;(\forall x\in\cX\setminus\{x^*\},\;\forall a_{x^*}\in\cA).
    \end{align}
    Otherwise, we say $\mbM^{\cA|\cX}_\rA$ is star-incompatible with respect to $x^*$.
    
    (When it is clear from the context what $x^*$ is, we will simply say that the PMD $\mbM^{\cA|\cX}_\rA$ is star-compatible / star-incompatible.)
\end{definition}

\begin{fact}[Proposition 1 in Ref.~\cite{li2025necessary}]\label{fact}
The following holds: for an assemblage $\sigma^{\cA|\cX} _\rB$, $p^{x^*}_\guess(\sigma^{\cA|\cX} _\rB )<1$ if and only if $\sigma^{\cA|\cX} _\rB\notin \cD_\rB^{\cA|\cX}(x^*):=\{\sigma(\mbM^{\cA|\cX}_\rA,\rho_{\rA\rB})|\;\rho_{\rA\rB}\in\cS(\cH_\rA\otimes\cH_\rB),\mbM^{\cA|\cX}_\rA\text{is star-compatible}\}$.
\end{fact}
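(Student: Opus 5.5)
The plan is to prove the two directions separately. The forward direction shows that a star-compatible realization lets Eve guess perfectly. The converse turns an optimal solution of the SDP \eqref{eq:guessing_assemblage} with value $1$ into a star-compatible realization, using the standard steering-realization construction (Schr\"odinger--HJW). Throughout, I will use that $p^{x^*}_\guess\le 1$ always holds. Indeed, $\sum_e\Tr\tau^{ee|x^*}_\rB\le\sum_{a,e}\Tr\tau^{ae|x^*}_\rB=\sum_a\Tr\sigma^{a|x^*}_\rB=1$, since all $\tau$ are positive semidefinite.

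\emph{($\sigma\in\cD^{\cA|\cX}_\rB(x^*)\Rightarrow p^{x^*}_\guess=1$.)} Suppose $\sigma^{\cA|\cX}_\rB=\sigma(\mbM^{\cA|\cX}_\rA,\rho_{\rA\rB})$ with $\mbM^{\cA|\cX}_\rA$ star-compatible via the joint POVMs $\J^{\cA\times\cA|x}_\rA$. I define the feasible point
\[
\tau^{a_xe|x}_\rB:=\Tr_\rA[(J^{a_xe|x}_\rA\otimes\one_\rB)\rho_{\rA\rB}]\quad (x\neq x^*),
\]
together with $\tau^{ae|x^*}_\rB:=\delta_{a,e}\,\sigma^{e|x^*}_\rB$. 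Positivity is immediate. Summing over $e$ reproduces $\sigma^{a_x|x}_\rB$ by the first marginal condition. Summing over $a_x$ gives $\Tr_\rA[(M^{e|x^*}_\rA\otimes\one_\rB)\rho_{\rA\rB}]=\sigma^{e|x^*}_\rB=\sum_a\tau^{ae|x^*}_\rB$ by the second marginal condition. The objective equals $\sum_e\Tr\sigma^{e|x^*}_\rB=1$.

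\emph{($p^{x^*}_\guess=1\Rightarrow\sigma\in\cD^{\cA|\cX}_\rB(x^*)$.)} The feasible set of \eqref{eq:guessing_assemblage} is compact, being bounded by $\zero\preceq\tau\preceq\sigma$, so an optimal $\tau$ attains the value $1$. The trace inequality above is then tight, which forces $\Tr\tau^{ae|x^*}_\rB=0$ for $a\neq e$. Hence $\tau^{ae|x^*}_\rB=\zero$ for $a\neq e$ by positivity, and therefore $\tau^{ee|x^*}_\rB=\sigma^{e|x^*}_\rB$. The no-signalling constraint then reads $\sum_{a_x}\tau^{a_xe|x}_\rB=\sigma^{e|x^*}_\rB$ for every $x$.

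Next I build the realization. Let $\rho_\rB:=\sum_a\sigma^{a|x^*}_\rB$, which equals $\sum_a\sigma^{a|x}_\rB$ for every $x$ by the constraints, and let $\Pi$ denote the projector onto its support. Take $\cH_\rA\cong\cH_\rB$ and the purification $\ket{\psi}=(\one\otimes\rho_\rB^{1/2})\sum_i\ket{ii}$. I set
\[
J^{a_xe|x}_\rA:=\bigl(\rho_\rB^{-1/2}\tau^{a_xe|x}_\rB\rho_\rB^{-1/2}\bigr)^T,
\]
with the inverse taken on the support. This is well defined because $\tau^{a_xe|x}_\rB\preceq\sigma^{a_x|x}_\rB\preceq\rho_\rB$. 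I define $M^{a_x|x}_\rA$ analogously from $\sigma^{a_x|x}_\rB$, including $x=x^*$. The marginals of $J^{\cdot\cdot|x}$ are exactly $M^{a_x|x}_\rA$ and $M^{e|x^*}_\rA$ by linearity and the two constraints just derived. Everything sums to $\Pi^T$, so I complete all POVMs by adding $(\one-\Pi)^T$ to a fixed outcome, say $J^{00|x}$, $M^{0|x}$ and $M^{0|x^*}$. This keeps the marginal relations intact. The standard identity $\Tr_\rA[(X^T\otimes\one)\ketbra{\psi}{\psi}]=\rho_\rB^{1/2}X\rho_\rB^{1/2}$ shows that $\sigma(\mbM^{\cA|\cX}_\rA,\ketbra{\psi}{\psi})=\sigma^{\cA|\cX}_\rB$. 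The added kernel terms contribute nothing, since they are annihilated by $\rho_\rB^{1/2}$.

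The main obstacle I expect is this last step. One must handle the support of $\rho_\rB$ and the transpose conventions carefully, and the kernel completion has to be made \emph{consistently} across the joint POVM and both marginals so that star-compatibility survives. Attainment of the optimum is routine by compactness.
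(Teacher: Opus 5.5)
Your proof is correct and follows essentially the route the paper sketches for this cited fact. Your argument that an optimal $\tau$ of value $1$ must satisfy $\tau^{ae|x^*}_\rB=\delta_{ae}\sigma^{e|x^*}_\rB$ is the content of Lemma~\ref{lemma:steerability_randomness}, since it yields the star-unsteerable decomposition. Your realization $J^{a_xe|x}_\rA=(\rho_\rB^{-1/2}\tau^{a_xe|x}_\rB\rho_\rB^{-1/2})^{T}$ on a purification of $\rho_\rB$ is the same transpose-and-sandwich map as in the proof of Lemma~\ref{lemma:compatibility-steerability}, and your kernel completion handles the rank-deficient case that the full-Schmidt-rank assumption there avoids.

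One point should be stated explicitly: you choose Alice's space ($\cH_\rA\cong\cH_\rB$). This is legitimate for an uncharacterized device, and it is needed. For a fixed one-dimensional $\cH_\rA$, the assemblage $\sigma^{a|x}_\rB=\ketbra{a}{a}/2$ ($a\in\{0,1\}$, every $x$) has $p^{x^*}_\guess=1$ but cannot be realized.
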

Note that this fact provides a necessary and sufficient condition for certified randomness at the level of assemblage structure, \textit{i.e.}, an assemblage has certified randomness if and only if it arises from \textit{a} star-incompatible PMD. However, does \textit{every} star incompatible PMD give rise to a random assemblage? It is conceivable that a star-incompatible PMD held by Alice is such that the assemblage $\sigma(\mbM^{\cA|\cX}_\rA,\rho_{\rA\rB})$, for any choice of $\rho_{\rA\rB}$, fails to be random. Hence, at the level of the PMD of Alice, Fact \ref{fact} only implies that star-incompatibility of the PMD is a necessary resource for generating certified randomness while leaving its sufficiency unresolved. 
Our first result is that star-incompatibility of Alice's PMD is sufficient for certified randomness.
\begin{theorem}\label{theorem:incompatibility-randomness}
    Let $\dim\cH_\rA=\dim\cH_\rB=d$.
    If a PMD $\mbM^{\cA|\cX}_\rA$ on $\cH_\rA$ is star-incompatible, every pure bipartite entangled state $\ket{\psi}_{\rA\rB}$ with Schmidt rank $d$ yields an assemblage with certified randomness, that is,  $p^{x^*}_\guess(\sigma(\mbM^{\cA|\cX}_\rA,\ket{\psi}_{\rA\rB}))<1$.
\end{theorem}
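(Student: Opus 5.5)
We argue by contraposition, using Fact~\ref{fact}. Suppose $p^{x^*}_\guess(\sigma(\mbM^{\cA|\cX}_\rA,\ket{\psi}_{\rA\rB}))=1$. By Fact~\ref{fact}, the assemblage $\sigma^{a_x|x}_\rB=\Tr_\rA[(M^{a_x|x}_\rA\otimes\one_\rB)\ketbra{\psi}{\psi}_{\rA\rB}]$ then lies in $\cD^{\cA|\cX}_\rB(x^*)$. Rather than working with an abstract star-compatible realization, we use the SDP~\eqref{eq:guessing_assemblage} directly: perfect guessing means there are operators $\tau^{a_xe|x}_\rB\succeq\zero_\rB$ meeting all constraints with $\sum_e\Tr\tau^{ee|x^*}_\rB=1$. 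Because $\sum_{e,a_{x^*}}\Tr\tau^{a_{x^*}e|x^*}_\rB=1$ and all terms are positive, this forces $\tau^{a_{x^*}e|x^*}_\rB=\zero_\rB$ for $a_{x^*}\neq e$. Hence $\tau^{ee|x^*}_\rB=\sigma^{e|x^*}_\rB$, and the no-signalling constraint gives
\begin{equation*}
\sum_{a_x}\tau^{a_xe|x}_\rB=\sigma^{e|x^*}_\rB\quad(\forall x).
\end{equation*}

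Next we pull these Bob-side operators back to Alice. Write $\ket{\psi}_{\rA\rB}=(\one_\rA\otimes\sqrt{\rho_\rB}\,)\ket{\Omega}$, where $\ket{\Omega}=\sum_i\ket{i}\ket{i}$ is unnormalized and $\rho_\rB$ is the reduced state. Full Schmidt rank means $\rho_\rB$ is invertible. A direct calculation gives $\sigma^{a|x}_\rB=\sqrt{\rho_\rB}\,(M^{a|x}_\rA)^{T}\sqrt{\rho_\rB}$, with the transpose taken in the Schmidt basis. The map $\Phi(Y)=\bigl(\rho_\rB^{-1/2}Y\rho_\rB^{-1/2}\bigr)^{T}$ is well defined, linear, invertible, and preserves positive semidefiniteness. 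For each $x\neq x^*$, define
\begin{equation*}
J^{a_xa_{x^*}|x}_\rA:=\Phi\bigl(\tau^{a_xa_{x^*}|x}_\rB\bigr)\succeq\zero_\rA,
\end{equation*}
where Eve's guess $e$ is relabelled as $a_{x^*}$. Applying $\Phi$ to the constraints gives $\sum_{a_{x^*}}J^{a_xa_{x^*}|x}_\rA=\Phi(\sigma^{a_x|x}_\rB)=M^{a_x|x}_\rA$ and $\sum_{a_x}J^{a_xa_{x^*}|x}_\rA=\Phi(\sigma^{a_{x^*}|x^*}_\rB)=M^{a_{x^*}|x^*}_\rA$. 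Summing over both indices yields $\one_\rA$, so each $\J^{\cA\times\cA|x}_\rA$ is a POVM. These are exactly the joint POVMs required by the definition, so $\mbM^{\cA|\cX}_\rA$ is star-compatible with respect to $x^*$, contradicting the hypothesis. Therefore $p^{x^*}_\guess<1$.

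The main step is the positivity-preserving inversion $\Phi$. It exists only because the Schmidt rank is $d=\dim\cH_\rA$, which makes $\rho_\rB$ invertible and the map $M\mapsto\sqrt{\rho_\rB}M^T\sqrt{\rho_\rB}$ a bijection on positive operators. The other point needing care is the deduction that the off-diagonal $\tau^{a_{x^*}e|x^*}_\rB$ vanish, which requires the optimum of the SDP to be attained. This holds because the feasible set is compact: it is bounded by positivity together with the fixed traces. All remaining steps are linear bookkeeping.
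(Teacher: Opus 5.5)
Your proof is correct and is essentially the paper's argument. Your map $\Phi$ coincides with the paper's $Y\mapsto dD^{-1}_\rA Y^{\top}D^{-1}_\rA$ from the proof of Lemma~\ref{lemma:compatibility-steerability}, and you use it the same way to turn a star-unsteerable decomposition into joint POVMs; the only difference is that you derive ``$p^{x^*}_\guess=1\Rightarrow$ star-unsteerable'' inline from an attained optimum of Eq.~\eqref{eq:guessing_assemblage}, whose blocks $\tau^{a_xe|x}_\rB$ ($x\neq x^*$) play the role of the $\rho^{a_xa_{x^*}|x}_\rB$, rather than citing Lemma~\ref{lemma:steerability_randomness}, and your opening appeal to Fact~\ref{fact} is never used and can be dropped.
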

Note that the conceptual counterparts of our result in the case of incompatibility vis-\`a-vis steering, Bell nonlocality, and steering-based randomness are the following: incompatibility of Alice's PMD is necessary and sufficient for quantum steering \cite{quintino2014joint,uola2014joint}, is necessary but not sufficient for bipartite Bell nonlocality  \cite{bene2018measurement,plavala2025all}, and is necessary but not sufficient for steering-based randomness.

A key intermediary notion needed to prove Theorem \ref{theorem:incompatibility-randomness} is \textit{star-unsteerability}, defined as follows.
\begin{definition}
    An assemblage $\sigma^{\cA|\cX} _\rB:=\{\sigma_\rB^{a_x|x}\}_{a_x\in\cA,x\in\cX}$ is said to be star-unsteerable (originally, $K_{1,n-1}(x^*)$-partially-unsteerable~\cite{li2025necessary}) with respect to setting $x^*$ if and only if there is a family of unnormalized states $\{\rho^{a_xa_{x^*}|x}_\rB\}_{a_x,a_{x^*}\in\cA,x\in\cX\setminus\{x^*\}}$ such that
    \begin{align}
        \sigma_\rB^{a_x|x}&=\sum_{a_{x^*}\in\cA}\rho^{a_xa_{x^*}|x}_\rB\;(
             \forall x\in\cX\setminus\{x^*\},\;\forall a_x\in\cA),\\
        \sigma^{a_{x^*}|x^*}_{\rB}&=\sum_{a_x\in\cA}\rho^{a_xa_{x^*}|x}_\rB \; (\forall x\in\cX\setminus\{x^*\},\;\forall a_{x^*}\in\cA).
    \end{align}
    Otherwise, the assemblage $\sigma^{\cA|\cX} _\rB $ is said to be star-steerable with respect to the ensemble $x^*$.
    
    (When it is clear from the context what $x^*$ is, we will simply say that the assemblage is star-steerable / star-unsteerable.)
\end{definition}
The following lemma, proven by Ref.~\cite{li2025necessary}, fully characterizes assemblages with no certified randomness.
\begin{lemma}\label{lemma:steerability_randomness}
    An assemblage $\sigma^{\cA|\cX}_{\rB }$ has no certified randomness, that is, $p^{x^*}_\guess(\sigma^{\cA|\cX}_{\rB })=1$, if and only if $\sigma^{\cA|\cX}_{\rB }$ is star-unsteerable.
\end{lemma}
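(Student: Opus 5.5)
The plan is to prove both directions directly from the SDP \eqref{eq:guessing_assemblage}, reading the variables $\tau^{a_xe|x}_\rB$ as a refinement of the assemblage by Eve's guess $e$. Throughout I use two facts about a valid assemblage. First, it is no-signalling: $\sum_{a_x}\sigma^{a_x|x}_\rB=\rho_\rB$ for every $x$. Second, $\Tr\rho_\rB=1$.

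I also note that $p^{x^*}_\guess\le 1$ always holds. Indeed, since each $\tau^{ae|x^*}_\rB\succeq\zero_\rB$,
\begin{equation*}
\sum_e\Tr\tau^{ee|x^*}_\rB\le\sum_{a,e}\Tr\tau^{ae|x^*}_\rB=\sum_a\Tr\sigma^{a|x^*}_\rB=1 .
\end{equation*}

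\textbf{Star-unsteerable implies $p^{x^*}_\guess=1$.} Suppose the family $\{\rho^{a_xa_{x^*}|x}_\rB\}$ witnesses star-unsteerability. I build a feasible point of the SDP that attains value $1$ as follows.
\begin{itemize}
\item For $x\neq x^*$, set $\tau^{a_xe|x}_\rB:=\rho^{a_xe|x}_\rB$. Here Eve's guess $e$ plays the role of the $x^*$-outcome.
\item For $x=x^*$, set $\tau^{a e|x^*}_\rB:=\delta_{a,e}\,\sigma^{e|x^*}_\rB$.
\end{itemize}
Now check each constraint.
\begin{itemize}
\item Positivity is immediate.
\item The first marginal condition, $\sum_e\tau^{a_xe|x}_\rB=\sigma^{a_x|x}_\rB$, is the first star-unsteerability equation for $x\neq x^*$, and it is trivial for $x=x^*$.
\item The second condition requires $\sum_{a_x}\tau^{a_xe|x}_\rB=\sum_a\tau^{ae|x^*}_\rB=\sigma^{e|x^*}_\rB$. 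This is exactly the second star-unsteerability equation.
\end{itemize}
The objective is then $\sum_e\Tr\sigma^{e|x^*}_\rB=1$. Combined with the upper bound above, this gives $p^{x^*}_\guess=1$.

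\textbf{$p^{x^*}_\guess=1$ implies star-unsteerable.} First I argue that the optimum is attained. Every feasible point satisfies $\zero_\rB\preceq\tau^{a_xe|x}_\rB\preceq\sigma^{a_x|x}_\rB$, so the feasible set is closed and bounded, hence compact, and the linear objective achieves its supremum.

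Take an optimal $\tau$ with value $1$. The upper-bound chain above is then tight, so $\sum_{a\ne e}\Tr\tau^{ae|x^*}_\rB=0$. Each of these operators is positive semidefinite, so $\tau^{ae|x^*}_\rB=\zero_\rB$ whenever $a\neq e$. The first constraint at $x=x^*$ then gives $\tau^{ee|x^*}_\rB=\sigma^{e|x^*}_\rB$, and therefore $\sum_a\tau^{ae|x^*}_\rB=\sigma^{e|x^*}_\rB$.

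Now define $\rho^{a_xa_{x^*}|x}_\rB:=\tau^{a_xa_{x^*}|x}_\rB$ for $x\neq x^*$. These are positive semidefinite, and they satisfy both star-unsteerability equations:
\begin{itemize}
\item $\sum_{a_{x^*}}\rho^{a_xa_{x^*}|x}_\rB=\sigma^{a_x|x}_\rB$ by the first constraint;
\item $\sum_{a_x}\rho^{a_xa_{x^*}|x}_\rB=\sigma^{a_{x^*}|x^*}_\rB$ by the second constraint together with the identity just derived.
\end{itemize}
Hence the assemblage is star-unsteerable.

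The only delicate points are attainment of the optimum and the fact that trace-zero positive semidefinite operators vanish. Both are routine, so I expect no serious obstacle.
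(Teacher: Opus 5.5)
Your proof is correct, including the two points you flag. Feasible points satisfy $\zero_\rB\preceq\tau^{a_xe|x}_\rB\preceq\sigma^{a_x|x}_\rB$, so an optimizer exists; the feasible set is nonempty because its supremum is $1$. Tightness of $\sum_e\Tr\tau^{ee|x^*}_\rB\le 1$ then forces $\tau^{ae|x^*}_\rB=\delta_{ae}\,\sigma^{e|x^*}_\rB$, after which the $x\neq x^*$ blocks of the optimizer are literally a star-unsteerable decomposition.

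The comparison with the paper is lopsided, because the paper gives no proof of this lemma: it imports it from Ref.~\cite{li2025necessary}. Your argument therefore supplies a self-contained derivation directly from the SDP~\eqref{eq:guessing_assemblage} as formulated here. Your easy direction is the assemblage-level counterpart of a construction the paper does carry out, in the proof of Theorem~\ref{theorem:weight}. There the feasible point $G'^{a_xe|x}_\rA$ is the joint POVM $J^{a_xe|x}_\rA$ for $x\neq x^*$ and $\delta_{a_{x^*}e}F^{\circ a_{x^*}|x^*}_\rA$ at $x=x^*$, i.e.\ Eve's guess is identified with the $x^*$-outcome exactly as in your choice of $\tau$.

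The converse is the part the citation hides. Its whole content is your observation that guessing probability $1$ leaves Eve no freedom at $x^*$, so any optimal strategy is itself the unsteerability witness. Note also that your argument uses only $\sum_a\Tr\sigma^{a|x^*}_\rB=1$, not no-signalling as a separate assumption, since the latter already follows from feasibility of the SDP.
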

We present the following lemma, which asserts that star-steerable assemblages can be generated from star-incompatible measurements (for the proof, see Supplemental Material~\cite{supplement}).
\begin{lemma}\label{lemma:compatibility-steerability}
    Let $\dim\cH_\rA=\dim\cH_\rB=d<+\infty$. If a PMD $\mbM^{\cA|\cX}_\rA$ is star-incompatible, then, for any pure entangled state $\ket{\psi}_{\rA\rB}$ with Schmidt rank $d$, $\sigma(\mbM^{\cA|\cX}_\rA,\ket{\psi}_{\rA\rB})$ is star-steerable.
\end{lemma}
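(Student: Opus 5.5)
We prove the contrapositive: if $\sigma(\mbM^{\cA|\cX}_\rA,\ket{\psi}_{\rA\rB})$ is star-unsteerable, then $\mbM^{\cA|\cX}_\rA$ is star-compatible. The idea is the standard invertibility argument used for the equivalence between joint measurability and unsteerability~\cite{quintino2014joint,uola2014joint}.

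First, write the Schmidt decomposition $\ket{\psi}_{\rA\rB}=\sum_{i=1}^d\sqrt{\lambda_i}\ket{i}_\rA\ket{i}_\rB$ with all $\lambda_i>0$, and set $\Lambda:=\sum_i\sqrt{\lambda_i}\ketbra{i}{i}_\rB$. Since the Schmidt rank is $d$, $\Lambda$ is invertible on $\cH_\rB$. A direct computation gives
\begin{equation*}
\sigma^{a_x|x}_\rB=\Lambda\,(M^{a_x|x}_\rA)^{T}\Lambda ,
\end{equation*}
where the transpose is taken in the Schmidt basis and the operator is identified with one on $\cH_\rB$ via $\ket{i}_\rA\mapsto\ket{i}_\rB$. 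This map $X\mapsto\Lambda X^T\Lambda$ is linear, invertible, and positivity preserving in both directions. Its inverse is $Y\mapsto(\Lambda^{-1}Y\Lambda^{-1})^T$.

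Second, suppose the assemblage is star-unsteerable, with a family $\{\rho^{a_xa_{x^*}|x}_\rB\}$ as in the definition. Define
\begin{equation*}
J^{a_xa_{x^*}|x}_\rA:=\bigl(\Lambda^{-1}\rho^{a_xa_{x^*}|x}_\rB\Lambda^{-1}\bigr)^{T}
\end{equation*}
for each $x\neq x^*$. Each $J^{a_xa_{x^*}|x}_\rA$ is positive semidefinite, because congruence by $\Lambda^{-1}$ and transposition preserve positivity. Applying the inverse map to the two marginal conditions of star-unsteerability, linearity gives $\sum_{a_{x^*}}J^{a_xa_{x^*}|x}_\rA=M^{a_x|x}_\rA$ and $\sum_{a_x}J^{a_xa_{x^*}|x}_\rA=M^{a_{x^*}|x^*}_\rA$. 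Normalization then follows automatically: $\sum_{a_x,a_{x^*}}J^{a_xa_{x^*}|x}_\rA=\sum_{a_{x^*}}M^{a_{x^*}|x^*}_\rA=\one_\rA$. So each $\J^{\cA\times\cA|x}_\rA$ is a valid POVM, and $\mbM^{\cA|\cX}_\rA$ is star-compatible. This contradicts the hypothesis and proves Lemma~\ref{lemma:compatibility-steerability}.

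The only step needing care is the first: deriving the steering identity with correct transpose and basis conventions, and checking that invertibility of $\Lambda$ (full Schmidt rank, together with $\dim\cH_\rA=\dim\cH_\rB=d$) makes the inverse map well defined on all of $\cL(\cH_\rA)$. Everything else is linearity.

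Combining Lemma~\ref{lemma:compatibility-steerability} with Lemma~\ref{lemma:steerability_randomness} then yields Theorem~\ref{theorem:incompatibility-randomness} immediately.
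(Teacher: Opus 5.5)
Your proposal is correct and follows the same route as the paper. Both arguments write the assemblage as $\sigma^{a_x|x}_\rB=\Lambda (M^{a_x|x}_\rA)^{\top}\Lambda$ (the paper's $\frac{1}{d}D_\rA M^{a_x|x\top}_\rA D_\rA$ with $D_\rA=\sqrt{d}\,\Lambda$), apply the inverse map to a star-unsteerable decomposition, and verify positivity and the marginals to obtain star-compatibility. The only difference is minor: you get normalization of the joint POVMs directly from the $x^*$-marginal, whereas the paper derives it from $\sum_{a_x,a_{x^*}}\rho^{a_xa_{x^*}|x}_\rB=\rho_\rB$ together with $dD^{-1}_\rA\rho^\top_\rB D^{-1}_\rA=\one_\rA$.
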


\begin{proof}[Proof of Theorem~\ref{theorem:incompatibility-randomness}]
    Let us set $\dim\cH_\rA=\dim\cH_\rB=d$.
    From Lemma~\ref{lemma:compatibility-steerability}, if $\mbM^{\cA|\cX}_\rA$ is star-incompatible, then for every pure entangled state $\ket{\psi}_{\rA\rB}$ with Schmidt rank $d$, the realized assemblage $\sigma(\mbM^{\cA|\cX}_\rA,\ket{\psi}_{\rA\rB})$ is star-steerable.
    Then, from Lemma~\ref{lemma:steerability_randomness}, it holds that $p^{x^*}_\guess(\sigma(\mbM^{\cA|\cX}_\rA,\ket{\psi}_{\rA\rB}))<1$.
\end{proof}

Fact~\ref{fact} follows from combining Lemma~\ref{lemma:steerability_randomness} with the fact that generating star-unsteerable assemblages needs star-incompatible measurements. 
On the other hand, we combine Lemma~\ref{lemma:compatibility-steerability}—which states that star-incompatible measurements can produce star-steerable assemblages—with Lemma~\ref{lemma:steerability_randomness} to obtain the converse claim. This is summarized by the following corollary (also see Fig.~\ref{fig:equivalence}):
\begin{figure}[t]
\centering
\includegraphics[width=\linewidth]{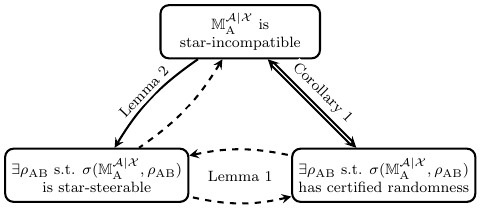}
\caption{Equivalence between star-incompatibility, star-steerability, and certified randomness. The dashed arrows indicate implications shown in Proposition 1 in Ref.~\cite{li2025necessary}, the solid arrows are our results. The double left and right arrows indicate equivalence.}
\label{fig:equivalence}
\end{figure}

\begin{corollary}\label{corollary}
    Let $\dim\cH_\rA=\dim\cH_\B=d$.
    For a given PMD $\mbM^{\cA|\cX}_\rA$ on $\cH_\rA$, the following statements are equivalent: (i) $\mbM^{\cA|\cX}_\rA$ is star-incompatible. (ii) There is a state $\rho_{\rA\rB}\in\cS(\cH_\rA\otimes\cH_\rB)$ such that $\sigma(\mbM^{\cA|\cX}_\rA,\rho_{\rA\rB})$ is star-steerable. (iii) There is a state $\rho_{\rA\rB}\in\cS(\cH_\rA\otimes\cH_\rB)$ such that $p^{x^*}_\guess(\sigma(\mbM^{\cA|\cX}_\rA,\rho_{\rA\rB}))<1$.
\end{corollary}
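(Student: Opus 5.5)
The corollary asks for three equivalences. I would prove them in a cycle: (i)$\Rightarrow$(ii), (ii)$\Leftrightarrow$(iii), and (ii)$\Rightarrow$(i). Everything reduces to Lemma~\ref{lemma:steerability_randomness}, Lemma~\ref{lemma:compatibility-steerability}, and one elementary observation about star-compatible PMDs.

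\textbf{(i)$\Rightarrow$(ii).} Since $\dim\cH_\rA=\dim\cH_\rB=d$, a full Schmidt rank state exists, for instance $\ket{\psi}_{\rA\rB}=d^{-1/2}\sum_{i}\ket{i}_\rA\ket{i}_\rB$. Lemma~\ref{lemma:compatibility-steerability} then gives that $\sigma(\mbM^{\cA|\cX}_\rA,\ketbra{\psi}{\psi}_{\rA\rB})$ is star-steerable. So we may take $\rho_{\rA\rB}=\ketbra{\psi}{\psi}_{\rA\rB}$.

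\textbf{(ii)$\Leftrightarrow$(iii).} For a fixed state, the assemblage $\sigma(\mbM^{\cA|\cX}_\rA,\rho_{\rA\rB})$ is star-steerable exactly when it is not star-unsteerable. By Lemma~\ref{lemma:steerability_randomness}, that holds exactly when $p^{x^*}_\guess<1$. So the same $\rho_{\rA\rB}$ witnesses both statements, in either direction.

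\textbf{(ii)$\Rightarrow$(i).} I argue by contraposition. Suppose $\mbM^{\cA|\cX}_\rA$ is star-compatible, with joint POVMs $\J^{\cA\times\cA|x}_\rA$. For an arbitrary $\rho_{\rA\rB}$, define
\[
\rho^{a_xa_{x^*}|x}_\rB:=\Tr_\rA\bigl[(J^{a_xa_{x^*}|x}_\rA\otimes\one_\rB)\rho_{\rA\rB}\bigr].
\]
Each of these operators is positive semidefinite, since $J^{a_xa_{x^*}|x}_\rA\succeq\zero$ and $\rho_{\rA\rB}\succeq\zero$. By linearity of the partial trace, the two marginal conditions on $J$ carry over directly to the two marginal conditions in the definition of star-unsteerability. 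Hence every assemblage generated by this PMD is star-unsteerable, which contradicts (ii).

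The only substantive step is Lemma~\ref{lemma:compatibility-steerability}, which is assumed here. Everything else is routine linearity and positivity checks, so I expect no real obstacle.
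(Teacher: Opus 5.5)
Your proof is correct and follows the paper's route. (i)$\Rightarrow$(ii) comes from Lemma~\ref{lemma:compatibility-steerability} applied to a full-Schmidt-rank state, and (ii)$\Leftrightarrow$(iii) comes from Lemma~\ref{lemma:steerability_randomness}. The only difference is that you prove (ii)$\Rightarrow$(i) explicitly, by pushing the joint POVMs through the partial trace, whereas the paper takes this direction from Ref.~\cite{li2025necessary}; your argument is the standard one behind that citation.
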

\begin{remark}
    Confirming $p^{x^*}_\guess(\sigma(\mbM^{\cA|\cX}_\rA,\ket{\psi}_{\rA\rB}))=1$ for arbitrary full Schmidt rank pure entangled state $\ket{\psi}_{\rA\rB}$ with respect to all $x^*\in\cX$ is equivalent to certifying that the PMD $\mbM^{\cA|\cX}_\rA$ is star-compatible for all $x^*\in\cX$, which in turn is equivalent to certifying that the joint measurability structure \cite{kunjwal2014quantum} of  $\mbM^{\cA|\cX}_\rA$ satisfies pairwise compatibility~\cite{liang2011specker,heunen2014quantum,kunjwal2014quantum}. This means, in particular, that in any set of measurements that satisfies pairwise compatibility, each individual measurement is useless for steering-based randomness, even though such measurements (\textit{e.g.}, Specker's scenario \cite{liang2011specker}) can be useful for Bell inequality violations \cite{quintino2014joint,bene2018measurement,YAK24} as well as for demonstrations of generalized contextuality \cite{KG14,Kunjwal16}.
\end{remark}
\begin{remark}
    In End Matter, for a given PMD $\mbM^{\cA|\cX}_\rA$, we construct a ``see-saw'' algorithm to find a state $\rho_{\rA\rB}$ that makes $p^{x^*}_\guess(\sigma(\mbM^{\cA|\cX}_\rA,\rho_{\rA\rB}))$ as small as possible, though achieving $\min_{\rho_{\rA\rB}\in\cS(\cH_\rA\otimes\cH_\rB)}p^{x^*}_\guess(\sigma(\mbM^{\cA|\cX}_\rA,\rho_{\rA\rB}))$ might be difficult in general.
\end{remark}

\textit{Universal lower bound on the guessing probability}---.
If Alice performs a star-incompatible PMD, the guessing probability calculated from the assemblage may vary depending on the state that Alice and Bob share. 
We give a universal lower bound on the guessing probability of every realizable assemblage for a given PMD. 
First, we introduce the following guessing probability that is relevant in a prepare-and-measure protocol where a classical adversary Eve aims to guess Alice's outcomes for a measurement setting $x^*$.
\begin{definition}
    For a given PMD $\mbM^{\cA|\cX}_\rA$ and $\rho_\rA\in\cS(\cH_\rA)$,  we define a guessing probability $p^{x^*}_\guess (\mbM^{\cA|\cX}_\rA,\rho_\rA)$ as follows:
    \begin{align}
        \given\quad &\mbM^{\cA|\cX}_\rA,\rho_\rA,\notag\\
        p^{x^*}_\guess (\mbM^{\cA|\cX}_\rA,\rho_\rA)&:=\maximize{G^{a_xe|x}_\rA}\quad \sum_{e\in\cA}\Tr[G^{ee|x^*}_\rA\rho_\rA] \notag\\
        \subto\quad&\sum_{e\in\cA}G^{a_xe|x}_\rA=M^{a_x|x}_\rA\:(\forall x\in\cX,\;\forall a_x\in\cA)\notag\\
        &\sum_{a_x\in\cA}G^{a_xe|x}_\rA=\sum_{a_{x^*}\in\cA}G^{a_{x^*}e|x^*}_\rA\notag\\
        &\hspace{32mm}(\forall x\in\cX,\;\forall e\in\cA)\notag\\
        &G^{a_xe|x}_\rA\succeq\zero\:(\forall x\in\cX,\;\forall a_x, e\in\cA).\label{eq:guessing_PMD}
    \end{align}
\end{definition}

This SDP captures the following operational scenario: Alice's device leaks information, and an attacker Eve guesses $e\in\cA$ for Alice's outcome $a_{x^*}\in\cA$ corresponding to the measurement setting $x^*$.
The first constraint means that for Alice, who does not know the attacker's outcomes, the device performs as a PMD $\mbM^{\cA|\cX}_\rA$.
The second constraint requires the attacker to choose the strategy (namely, the PMD $\{G^{a_xe|x}_\rA\}_{a_x,e\in\cA,x\in\cX}$) before Alice freely selects the measurement.

Unlike previous calculations of $p_\guess^{x^*}(\sigma(\mbM^{\cA|\cX}_\rA,\rho_{\rA\rB}))$, Eve does not have access to quantum memory this time.
This makes $p^{x^*}_\guess (\mbM^{\cA|\cX}_\rA,\rho_\rA)$ as a universal lower bound for $p_\guess^{x^*}(\sigma(\mbM^{\cA|\cX}_\rA,\rho_{\rA\rB}))$ (for the proof, see Supplemental Material~\cite{supplement}).
\begin{theorem}\label{theorem:assemblage_PMD}
    Let $\mbM^{\cA|\cX}_\rA$ be a PMD and $\rho_\mathrm{A}\in\cS(\cH_\rA)$ be a state of Alice.
    For any assemblage $\sigma(\mbM^{\cA|\cX}_\rA,\rho_{\rA\rB})$ such that $\Tr_\rB\rho_{\rA\rB}=\rho_\rA$, it holds that
    \begin{equation}
        p_\guess^{x^*}(\sigma(\mbM^{\cA|\cX}_\rA,\rho_{\rA\rB}))\ge p^{x^*}_\guess (\mbM^{\cA|\cX}_\rA,\rho_\rA).
    \end{equation}
\end{theorem}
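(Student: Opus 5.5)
The plan is to show that every feasible point of the SDP \eqref{eq:guessing_PMD} can be mapped to a feasible point of the SDP \eqref{eq:guessing_assemblage} for the assemblage $\sigma(\mbM^{\cA|\cX}_\rA,\rho_{\rA\rB})$, and that this map preserves the objective value. Taking the maximum over feasible points then yields the inequality.

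Concretely, let $\{G^{a_xe|x}_\rA\}$ be any feasible (say optimal) solution of \eqref{eq:guessing_PMD}. Define
\begin{equation*}
\tau^{a_xe|x}_\rB:=\Tr_\rA\left[\left(G^{a_xe|x}_\rA\otimes\one_\rB\right)\rho_{\rA\rB}\right],
\end{equation*}
and check the three constraints of \eqref{eq:guessing_assemblage} in turn.
\begin{itemize}
\item[(i)] Positivity: since $G^{a_xe|x}_\rA\succeq\zero$ and $\rho_{\rA\rB}\succeq\zero$, the partial trace $\Tr_\rA[(G\otimes\one)\rho_{\rA\rB}]$ is positive semidefinite. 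One way to see this is to write it as $\Tr_\rA[(\sqrt{G}\otimes\one)\rho_{\rA\rB}(\sqrt{G}\otimes\one)]$; in particular it is Hermitian.
\item[(ii)] The first constraint of \eqref{eq:guessing_PMD} and linearity of the partial trace give $\sum_e\tau^{a_xe|x}_\rB=\Tr_\rA[(M^{a_x|x}_\rA\otimes\one_\rB)\rho_{\rA\rB}]=\sigma^{a_x|x}_\rB$, as required by Eq.~\eqref{eq:assemblage}.
\item[(iii)] The no-signalling-in-$x$ constraint transfers the same way: $\sum_{a_x}\tau^{a_xe|x}_\rB=\Tr_\rA[(\sum_{a_x}G^{a_xe|x}_\rA\otimes\one)\rho_{\rA\rB}]$, and the operator inside is independent of $x$.
\end{itemize}

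For the objective, we have $\Tr\tau^{ee|x^*}_\rB=\Tr[(G^{ee|x^*}_\rA\otimes\one_\rB)\rho_{\rA\rB}]=\Tr[G^{ee|x^*}_\rA\rho_\rA]$, using $\Tr_\rB\rho_{\rA\rB}=\rho_\rA$. Summing over $e$ shows that the objective of \eqref{eq:guessing_assemblage} at this feasible point equals $p^{x^*}_\guess(\mbM^{\cA|\cX}_\rA,\rho_\rA)$. Hence the maximum $p^{x^*}_\guess(\sigma(\mbM^{\cA|\cX}_\rA,\rho_{\rA\rB}))$ is at least this value.

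There is no real obstacle in this argument. The only point needing care is step (i), where positivity of $\Tr_\rA[(G\otimes\one)\rho_{\rA\rB}]$ relies on the square-root rewriting rather than on positivity of the product $(G\otimes\one)\rho_{\rA\rB}$ itself. One should also note that the maximum in \eqref{eq:guessing_PMD} is attained, since the feasible set is compact and nonempty (take $G^{a_xe|x}=M^{a_x|x}\delta_{e,e_0}$). Even without attainment, applying the argument to every feasible point suffices.
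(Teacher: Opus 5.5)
Your proposal is correct and matches the paper's proof: you push an optimal solution $\{G^{a_xe|x}_\rA\}$ of Eq.~\eqref{eq:guessing_PMD} through $\tau^{a_xe|x}_\rB=\Tr_\rA[(G^{a_xe|x}_\rA\otimes\one_\rB)\rho_{\rA\rB}]$, check the three constraints (with positivity via the $\sqrt{G}$ rewriting), and identify the objective as $\sum_e\Tr[G^{ee|x^*}_\rA\rho_\rA]$. The only difference is how attainment is justified. You use compactness, or avoid attainment by taking arbitrary feasible points, while the paper gets it from Slater's theorem on the dual; yours is the more elementary route.
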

This theorem can be used to evaluate certified randomness for a given PMD using only the PMD information, regardless of the details of how a particular assemblage is realized.

\textit{Star-incompatibility estimation from randomness}---.
Several studies aim to detect or quantify the incompatibility in a device-independent manner or in a one-sided device-independent manner~\cite{wolf2009measurements,chen2016natural,cavalcanti2016quantitative,quintino2019device,chen2021device}. 
Here we provide a method to quantify the amount of star-incompatibility necessary for getting an assemblage with a given guessing probability $p_\guess^{x^*}(\sigma(\mbM^{\cA|\cX}_\rA,\rho_{\rA\rB}))$.
We introduce \textit{star-incompatibility weight}, based on incompatibility weight~\cite{pusey2015verifying}, to quantify the star-incompatibility of $\mbM^{\cA|\cX}_\rA$.

\begin{definition}\label{definition:weight}
    For a given PMD $\mbM^{\cA|\cX}_\rA$ and setting $x^*\in 
    \cX$, star-incompatibility weight $W^{x^*}(\mbM^{\cA|\cX}_\rA)$ is defined as follows:
    \begin{align}
        \given \quad &\mbM^{\cA|\cX}_\rA,\notag\\
        W^{x^*}(\mbM^{\cA|\cX}_\rA)&:=\minimize{F^{a_x|x}_\rA,E^{a_x|x}_\rA}\quad  w\ge 0\notag\\
        \subto\quad & M^{a_x|x}_\rA=(1-w)F^{a_x|x}_\rA+wE^{a_x|x}_\rA\notag\\
        &\hspace{25mm}(\forall x\in\cX,\forall a_x\in\cA)\notag\\
        \quad & \F^{\cA|x}_\rA,\E^{\cA|x}_\rA\in\sM^{\cA}_\rA\quad(\forall x\in\cX)\notag\\
        \quad & \{\F^{\cA|x}_\rA\}_{x\in\cX}\text{ is star-compatible}.\label{eq:weight}
    \end{align}
\end{definition}

Weight-based measures have also been introduced for steering~\cite{skrzypczyk2014quantifying} and star-steering~\cite{li2025necessary}, and are valid in general convex resource theories, with a clear operational meaning~\cite{ducuara2020operational}. We have the following theorem that quantifies the necessary amount of star-incompatibility weight for a given assemblage (for the proof, see Supplemental Material~\cite{supplement}).
\begin{theorem}\label{theorem:weight}
    Let $\mbM^{\cA|\cX}_\rA$ be a PMD.
    For any state $\rho_{\rA\rB}\in\cS(\cH_\rA\otimes\cH_\rB)$, it holds that
    \begin{equation}\label{eq:lower_bound}
        W^{x^*}(\mbM^{\cA|\cX}_\rA)\ge \frac{|\cA|}{|\cA|-1}\{1-p_\guess^{x^*}(\sigma(\mbM^{\cA|\cX}_\rA,\rho_{\rA\rB}))\}.
    \end{equation}
\end{theorem}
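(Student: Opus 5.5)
Take an optimal decomposition for $W^{x^*}$, i.e.\ $M^{a_x|x}_\rA=(1-w)F^{a_x|x}_\rA+wE^{a_x|x}_\rA$ with $w=W^{x^*}(\mbM^{\cA|\cX}_\rA)$ and $\{\F^{\cA|x}_\rA\}_{x\in\cX}$ star-compatible. Write the assemblage as the same convex combination,
\begin{equation}
\sigma^{a_x|x}_\rB=(1-w)\,\sigma^{a_x|x}_{\rB,F}+w\,\sigma^{a_x|x}_{\rB,E},
\end{equation}
where $\sigma_{\rB,F}:=\sigma(\{\F^{\cA|x}_\rA\}_x,\rho_{\rA\rB})$ and $\sigma_{\rB,E}:=\sigma(\{\E^{\cA|x}_\rA\}_x,\rho_{\rA\rB})$. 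Since $\sigma_{\rB,F}$ comes from a star-compatible PMD, it lies in $\cD^{\cA|\cX}_\rB(x^*)$. Fact~\ref{fact} (equivalently, the construction behind Lemma~\ref{lemma:steerability_randomness}) then gives $p^{x^*}_\guess(\sigma_{\rB,F})=1$. For $\sigma_{\rB,E}$ I use only the trivial bound $p^{x^*}_\guess(\sigma_{\rB,E})\ge 1/|\cA|$, obtained for instance by letting Eve's guess be a uniformly random label independent of everything.

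The next step is concavity of the SDP~\eqref{eq:guessing_assemblage} under mixing of assemblages. Let $\{\tau^{a_xe|x}_F\}$ and $\{\tau^{a_xe|x}_E\}$ be feasible (optimal) for the two pieces. Then $(1-w)\tau_F+w\tau_E$ is feasible for $\sigma^{\cA|\cX}_\rB$, because every constraint is linear or a positivity constraint. Its objective equals the same mixture of the two objectives, so
\begin{equation}
p^{x^*}_\guess(\sigma^{\cA|\cX}_\rB)\ge(1-w)\cdot 1+w\cdot\frac{1}{|\cA|}.
\end{equation}
Rearranging gives $1-p^{x^*}_\guess\le w\,\frac{|\cA|-1}{|\cA|}$, which is exactly Eq.~\eqref{eq:lower_bound}.

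The main point to check is the trivial lower bound $1/|\cA|$ for an arbitrary assemblage. I would construct explicitly $\tau^{a_xe|x}_\rB=\sigma^{a_x|x}_\rB/|\cA|$, which is feasible for three reasons:
\begin{itemize}
\item it is positive semidefinite;
\item it sums over $e$ to $\sigma^{a_x|x}_\rB$;
\item its sum over $a_x$ is $\rho_\rB/|\cA|$, which is independent of $x$, using the no-signalling property $\sum_{a_x}\sigma^{a_x|x}_\rB=\rho_\rB$ of any realizable assemblage.
\end{itemize}
Its objective is $\sum_e\Tr\sigma^{e|x^*}_\rB/|\cA|=1/|\cA|$.

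The other step to check is that the $F$-piece attains guessing probability exactly $1$. If one prefers not to route through Fact~\ref{fact}, the star-compatible parents $J^{a_xa_{x^*}|x}_\rA$ yield a star-unsteerable $\sigma_{\rB,F}$ directly, via $\rho^{a_xa_{x^*}|x}_\rB=\Tr_\rA[(J^{a_xa_{x^*}|x}_\rA\otimes\one)\rho_{\rA\rB}]$, and Lemma~\ref{lemma:steerability_randomness} then applies.
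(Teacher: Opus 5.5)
Your proof is correct. It uses the same convex-decomposition argument as the paper, but applies it on Bob's side rather than Alice's.

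The paper mixes feasible points of the PMD-level SDP~\eqref{eq:guessing_PMD}. It builds $G'$ from the joint POVMs $J^{a_xe|x}_\rA$, with $G'^{a_{x^*}e|x^*}_\rA=\delta_{a_{x^*}e}F^{a_{x^*}|x^*}_\rA$, and takes $G''$ optimal for the $\E$-part. This gives $p^{x^*}_\guess(\mbM^{\cA|\cX}_\rA,\rho_\rA)\ge 1-W^{x^*}(\mbM^{\cA|\cX}_\rA)(1-1/|\cA|)$, which is then transferred to the assemblage via Theorem~\ref{theorem:assemblage_PMD}.

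You instead split the assemblage itself and use concavity of the assemblage SDP~\eqref{eq:guessing_assemblage}. For the star-compatible piece you invoke Fact~\ref{fact} (or Lemma~\ref{lemma:steerability_randomness}), and for the other piece you use the explicit point $\sigma^{a_x|x}_{\rB,E}/|\cA|$. If you push the paper's $G$ through $\rho_{\rA\rB}$, as in the proof of Theorem~\ref{theorem:assemblage_PMD}, and replace $G''$ by the trivial point $E^{a_x|x}_\rA/|\cA|$, you recover exactly your feasible $\tau$. So the two constructions differ only in where the state is applied.

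Each route buys something different:
\begin{itemize}
\item The paper's route yields a stronger intermediate statement. The weight already bounds the classical-adversary quantity $p^{x^*}_\guess(\mbM^{\cA|\cX}_\rA,\rho_\rA)$, which depends only on $\rho_\rA$.
\item Your route is shorter and self-contained at the assemblage level. It needs neither the PMD-level SDP nor Theorem~\ref{theorem:assemblage_PMD}.
\end{itemize}

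You can also drop the appeal to Fact~\ref{fact}. Setting $\tau^{a_xe|x}_\rB=\Tr_\rA[(J^{a_xe|x}_\rA\otimes\one_\rB)\rho_{\rA\rB}]$ for $x\neq x^*$ and $\tau^{ae|x^*}_\rB=\delta_{ae}\sigma^{a|x^*}_{\rB,F}$ is directly a feasible point of value $1$.
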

By definition, $W^{x^*}(\mbM^{\cA|\cX}_\rA)=0$ if and only if $\mbM^{\cA|\cX}_\rA$ is star-compatible and its maximum value is $1$.
The presence of certified randomness, \textit{i.e.}, $p^{x^*}_\guess(\sigma^{\cA|\cX}_\rB)<1$, makes the right-hand side of this inequality positive, which forces $W^{x^*}(\mbM^{\cA|\cX}_\rA)$ to be positive, detecting star-incompatibility.
Thus, certified randomness is not only a witness of star-incompatibility (as explained in Ref.~\cite{li2025necessary}) but it also provides a quantitative one-sided estimation of star-incompatibility.
For example, maximum randomness as considered in Ref.~\cite{skrzypczyk2018maximal}, \textit{i.e.}, $p^{x^*}_\guess(\sigma(\mbM^{\cA|\cX}_\rA,\rho_{\rA\rB}))=1/|\cA|$, makes the right-hand side of this inequality $1$.
Thus, maximum randomness is achieved only for maximally star-incompatible PMDs in the sense that star-incompatible weight is $1$, the maximum possible value.

\textit{Conclusions.}---In the 1SDI randomness certification setup~\cite{passaro2015optimal}, while previous work~\cite{li2025necessary} implies that star-incompatibility is a necessary resource to generate assemblages with certified randomness, we have shown that all star-incompatible PMDs can generate assemblages with certified randomness by properly choosing a state.
This establishes the equivalence between certified randomness in the 1SDI setup and star-incompatibility.
We also give a lower bound on the star-incompatibility weight using the guessing probability.
This result demonstrates that performing a one-sided estimation of the amount of star-incompatibility using guessing probability is possible, providing a resource-theoretic link between star-incompatibility and certified randomness.
Given that the relationship between incompatibility and Bell nonlocality is not straightforward~\cite{wolf2009measurements,quintino2014joint,bene2018measurement,hirsch2018quantum,loulidi2022measurement,plavala2025all}, and that there is no simple quantitative relationship between randomness and Bell nonlocality either~\cite{acin2012randomness}, the equivalence and quantitative results revealed in this study appear to be properties characteristic of the 1SDI setup.

Let us remark here that the property of star-compatibility has a curious connection with the conjecture that all joint measurability structures are realizable with qubit measurements (Conjecture $4$ in Ref.~\cite{AK20}): specifically, if one can construct a PMD with $5$ or more qubit measurements that is not only star-compatible but also fails to have any futher compatibility relations, a potential counter-example to this conjecture would be removed; alternatively, the impossibility of such a PMD would rule out the conjecture. Whether some insights from quantum steering, via its connection with star-compatibility, could address this challenge is an interesting open question that we leave for future work.

\textit{Acknowledgments}---.
The authors thank Edwin Lobo and Shashaank Khanna for insightful discussions. This work received support from the French government under the France 2030 investment plan, as part of the Initiative d'Excellence d'Aix-Marseille Université-A*MIDEX, AMX-22-CEI-01.

\bibliography{myref}

\renewcommand{\theequation}{E\arabic{equation}}
\setcounter{equation}{0}  

\onecolumngrid
\section*{End Matters}
\twocolumngrid

\textit{Appendix: See-saw algorithm and numerical calculation}---.
Here, we introduce a method called the see-saw algorithm, which searches for a state to make an assemblage as random as possible for a given PMD. 
The see-saw algorithm, which searches a PMD that makes an assemblage as random as possible for a given state, was presented by Ref.~\cite{passaro2015optimal}. 
Our algorithm is similar to theirs, but the positioning of measurements and states are swapped; we fix a PMD and move a state.
We use the dual problem of Eq.~\eqref{eq:guessing_assemblage} given as follows~\cite{passaro2015optimal}:
\begin{align}
        \given\quad&\sigma^{\cA|\cX}_\rB,\notag\\
        \minimize{X^{a_x|x}_\rB,Y^{e|x}_\rB}\quad
        &\sum_{x\in\cX}\sum_{a_x\in\cA}\Tr[X^{a_x|x}_\rB\sigma^{a_x|x}_\rB]\notag\\
    \subto\quad & X^{a_x|x}_\rB-Y^{e|x}_\rB+\delta_{xx^*}\sum_{x'\in\cX}Y^{e|x'}\notag\\
    &\succeq\delta_{xx^*}\delta_{a_xe}\one_\rB\;(\forall x\in\cX,\forall a_x,e\in\cA).\label{eq:dual_guessing_assemblage}
\end{align}

The algorithm goes as follows (Algorithm~\ref{algorithm:see-saw}).
First, we randomly chose a state $\rho_{\rA\rB}\in\cS(\cH_\rA\otimes\cH_\rB)$ and construct an assemblage $\sigma^{\cA|\cX}_\rA:=\sigma(\mbM^{\cA|\cX}_\rA;\rho_{\rA\rB})$.
Then we solve Eq.~\eqref{eq:dual_guessing_assemblage}, the dual problem of Eq~.\eqref{eq:guessing_assemblage}.
Next, we compute 
\begin{equation}\label{eq:minimize_rho}
    \underset{\rho_{\rA\rB}\in\cS(\cH_\rA\otimes\cH_\rB)}{\mathrm{arg}\min}\sum_{x\in\cX}\sum_{a_x\in\cA}\Tr[(M^{a_x|x}_\rA\otimes X^{\circ a_x|x}_\rB)\rho_{\rA\rB}].
\end{equation}
For the state $\rho^\circ_{\rA\rB}$ explored in this optimization, we create an assemblage $\sigma(\mbM^{\cA|\cX}_\rA;\rho^\circ_{\rA\rB})$, and solve Eq.~\eqref{eq:dual_guessing_assemblage} again.
This iteration is repeated until the guessing probability change becomes sufficiently small and state change becomes sufficiently small in terms of the distance based on Hilbert--Schmidt norm $\|\cdot\|_{\rH\rS}$ (see, e.g., \cite{wilde2017quantum}).

As for the numerical results, we set $\epsilon_p=1.0\times10^{-8}, \epsilon_\rho:=1.0\times10^{-10}$, $T_{\max}=100$, and $K=4$.
Then we run Algorithm~\ref{algorithm:see-saw} $10$ times for each $0\le\eta\le1$.
In one of $10$ times, Algorithm~\ref{algorithm:see-saw} is started with the state that gave the smallest $p^{x^*}_\guess(\sigma(\mbM^{\cA|\cX}_\rA,\rho^\star_{\rA\rB}))$ in the previous $\eta$. 
In the remaining $9$ times, Algorithm~\ref{algorithm:see-saw} is started with a random pure state. 
This aims to make searching a state more efficient and also aims to escape from local optima as much as possible.

\begin{algorithm}
\caption{See-saw minimization of $p^{x^*}_{\rm guess}(\sigma(\mbM^{\cA|\cX}_\rA,\rho_{\rA\rB}))$ over $\rho_{\rA\rB}\in\cS(\cH_\rA\otimes\cH_\rB)$}
\label{algorithm:see-saw}
\SetKwInOut{Input}{Input}
\SetKwInOut{Output}{Output}
\SetKwInOut{Parameter}{Parameter}
\Input{A PMD $\mbM^{\cA|\cX}_\rA:=\{\M^{\cA|x}_\rA\}_{x\in\cX}$, $x^*\in\cX$, and an initial state $\rho_{AB}^{(0)}$.}
\Parameter{$\epsilon_p,\epsilon_\rho>0$, and $T_{\max},K$ (positive integers).}
\Output{State $\rho_{AB}^\star$ and $p^{x^*}_\guess(\sigma(\mbM^{\cA|\cX}_\rA,\rho^\star_{\rA\rB}))$.}

\BlankLine
\SetAlgoLined

Set $t\leftarrow 0$, $k\leftarrow 0$, $p^\prev\leftarrow +\infty$, and $\rho^\prev_{\rA\rB}\leftarrow \rho_{AB}^{(0)}$.\\

\While{$t<T_{\max}$}{
  Solve Eq.~\eqref{eq:dual_guessing_assemblage}, the dual problem of Eq.~\eqref{eq:guessing_assemblage}, to get $p^{x^*}_\guess(\sigma(\mbM^{\cA|\cX}_\rA,\rho^\prev_{\rA\rB}))$ and a dual optimal solution $\{X^{\circ a_x|x}_\rB\}$.\\
  Set $p^\new\leftarrow p^{x^*}_\guess(\sigma(\mbM^{\cA|\cX}_\rA,\rho^\prev_{\rA\rB}))$\\
  Set $\rho^\new_{\rA\rB}\leftarrow \underset{\rho_{\rA\rB}\in\cS(\cH_\rA\otimes\cH_\rB)}{\mathrm{arg}\min}\sum_{x,a_x}\Tr[(M^{a_x|x}_\rA\otimes X^{\circ a_x|x}_\rB)\rho_{\rA\rB}]$.\\
  \eIf{$|p^\new-p^\prev|<\epsilon_p$ \textbf{and} $\|\rho_{\rA\rB}^\new-\rho^\prev_{\rA\rB}\|_{\rH\rS}<\epsilon_\rho$}{
    Set $k\leftarrow k+1$.\\
    \If{$k\ge K$}{\textbf{break}.}
  }{
    Set $k\leftarrow 0$.\\
  }

  Set $p^\prev\leftarrow p^\new$ and $\rho^\prev_{\rA\rB}\leftarrow \rho_{\rA\rB}^\new$.\\
  Set $t\leftarrow t+1$.\\
}
Set $\rho^\star_{\rA\rB}\leftarrow \rho_{\rA\rB}^\prev,\;p^{x^*}_\guess(\sigma(\mbM^{\cA|\cX}_\rA,\rho^\star_{\rA\rB}))\leftarrow p^\prev$\\
\Return{$\rho^\star_{\rA\rB},\;p^{x^*}_\guess(\sigma(\mbM^{\cA|\cX}_\rA,\rho^\star_{\rA\rB}))$}
\end{algorithm}

Below, we explore states that minimize the guessing probability as much as possible using the see-saw algorithm in a concrete example and verify the result of Theorem 3 through numerical computation.
Let $\dim\cH_\rA=\dim\cH_\rB=2$.
We consider binary parametrized POVMs on $\cH_\rA$ that consist of Pauli matrices $\sigma_0:=I,\sigma_1:=X,\sigma_2:=Y,\sigma_3:=Z$
\begin{equation}\label{eq:pauli}
    \M^{\cA|i}_\rA:=\left\{\frac{1}{2}\sigma_0+\frac{1}{2}\eta\sigma_i,\frac{1}{2}\sigma_0-\frac{1}{2}\eta\sigma_i\right\}\;(i\in\{1,2,3\}).
\end{equation}
Here the parameter $\eta$ is $0\le\eta\le 1$.
The PMD we consider is $\mbM^{\cA|\cI}_\rA:=\{\M^{\cA|i}\}_{i=1}^3$.
As shown in Refs.~\cite{heinosaari2008notes,liang2011specker}, $\mbM^{\cA|\cI}_\rA:=\{\M^{\cA|i}\}_{i=1}^3$ is star-compatible with respect to $x^*=1$ when $\eta\le1/\sqrt{2}\approx0.7071$.
Fig.~\ref{fig:example} plots the guessing probability, star-incompatible weight, and the lower bound in Eq.~\eqref{eq:lower_bound}.

All SDPs were formulated using CVXPY~\cite{cvxpy} and solved with MOSEK~\cite{mosek}.
The code is available at~\cite{code}.
The version corresponding to the preprint is archived under the tag
``v0.1-preprint.''

\begin{figure}[t]
    \centering
    \includegraphics[width=\linewidth]{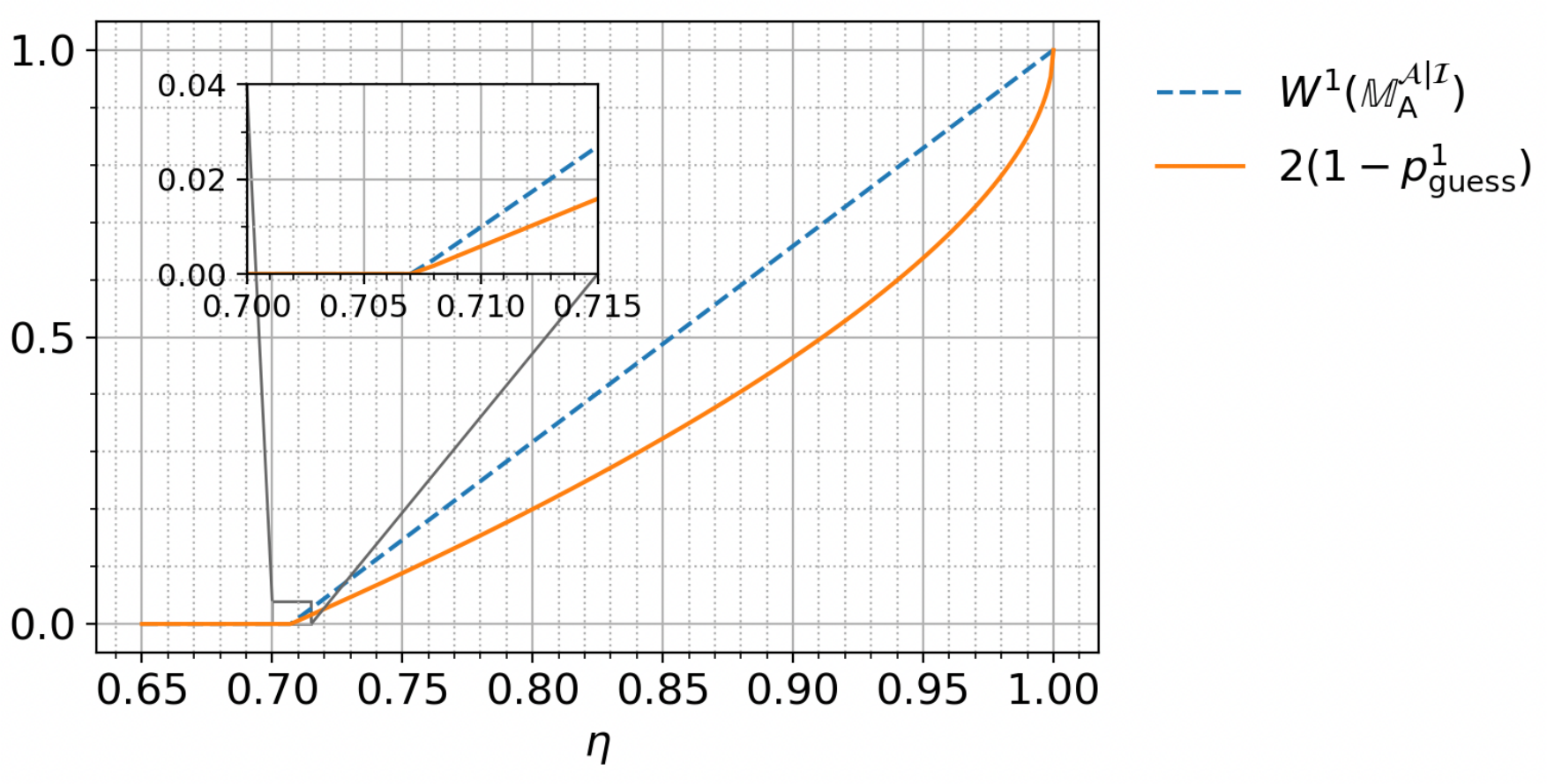}
    \caption{Star-incompatibility weight $W^1(\mbM^{\cA|\cI}_\rA)$ (dashed), and $2(1-p^1_\guess(\sigma(\mbM^{\cA|\cI}_\rA,\rho_{\rA\rB}))$ (solid) given in Theorem~\ref{theorem:weight} for a PMD $\mbM^{\cA|\cI}_{\rA}:=\{\M^{\cA|i}_\rA\}_{i=1}^3$ where $\M^{\cA|i}_\rA$ is given by Eq.~\eqref{eq:pauli}. $\eta$ is in increments of $0.01$ from $\eta=0.65$. As expected from analytic results~\cite{heinosaari2008notes,liang2011specker}, wherever $\eta$ is below $0.707$, the star-incompatibility weight is numerically $0$ and $p^1_\guess(\sigma(\mbM^{\cA|\cI}_\rA,\rho_{\rA\rB}))$ is $1$.}
    \label{fig:example}
\end{figure}

\clearpage
\newpage

\setcounter{page}{1}
\setcounter{section}{0}
\renewcommand{\thesection}{\arabic{section}}
\setcounter{equation}{0}
\renewcommand{\theequation}{S\arabic{equation}}
\setcounter{lemma}{0}
\renewcommand{\thelemma}{\arabic{lemma}}

\onecolumngrid

\begin{center}
    {\large \textbf{Supplemental Material for\\ ``All star-incompatible measurements can certify steering-based randomness''}}\\
    \vspace{0.3cm}
    Shintaro Minagawa\\
    \vspace{0.1cm}
    {\small\textit{Aix-Marseille University, CNRS, LIS, Marseille, France}}\\
    \vspace{0.3cm}
    Ravi Kunjwal\\
    \vspace{0.1cm}
    {\small
    \textit{Universit\'e libre de Bruxelles, QuIC, Brussels, Belgium and\\
    Aix-Marseille University, CNRS, LIS, Marseille, France}}
\end{center}

\section{Equivalence between star-incompatibility and certified randomness in the 1SDI setup}

This section provides the proofs for Lemma~\ref{lemma:compatibility-steerability}, which applies the method of Ref.~\cite{quintino2014joint}.

\begin{proof}[Proof of Lemma~\ref{lemma:compatibility-steerability}]
    In general, a state $\ket{\psi}_{\rA\rB}\in \cH_\rA\otimes\cH_\rB$, with Schmidt rank $d$, can be written as $\sum_{i=0}^{d-1}\sqrt{q _i}\ket{i}_\rA\otimes\ket{i}_\rB$ where $\{\ket{i}_\rA\}_{i=0}^{d-1}$ and $\{\ket{i}_\rB\}_{i=0}^{d-1}$ are basis of $\cH_\rA$ and $\cH_\rB$, respectively,  $q _i>0$ for all $i\in\{0,\dots,d-1\}$ and $\sum_{i=0}^{d-1}q _i=1$.
    Let $\ket{\Phi}_{\rA\rB}:=\sum_{i=0}^{d-1}\frac{1}{\sqrt{d}}\ket{i}_\rA\otimes\ket{i}_\rB$ be the maximally entangled state.
    Define an operator $D_\rA:=\sqrt{d}\;\mathrm{diag}(\sqrt{q _0},\dots,\sqrt{q_{d-1}})\in\cL(\cH_\rA)$.
    \begin{equation}
    \begin{split}    (D_\rA\otimes\one_\rB)\ket{\Phi}_{\rA\rB}&=\left(\sum_{i=0}^{d-1}\sqrt{q _i}\ketbra{i}{i}_\rA\otimes\one_\rB\right)\sum_{j=0}^{d-1}\ket{j}_\rA\otimes\ket{j}_\rB\\
    &=\sum_{i=0}^{d-1}\sqrt{q _i}\ket{i}_\rA\otimes\ket{i}_\rB\\
    &=\ket{\psi}_{\rA\rB}.
    \end{split}
    \end{equation}
    By choosing the state $\rho_{\rA\rB}$ that Alice and Bob share as $\ketbra{\psi}{\psi}_{\rA\rB}$, the assemblage $\{\sigma_\rB^{a_x|x}\}_{a_x\in\cA,x\in\cX}$ is given by
    \begin{equation}
        \begin{split}
            \sigma^{a_x|x}_\rB&=\Tr_\rA\left[\left(M^{a_x|x}_\rA\otimes\one_\rB\right)\ketbra{\psi}{\psi}_{\rA\rB}\right]\\
            &=\Tr_\rA\left[\left(M^{a_x|x}_\rA\otimes\one_\rB\right)(D_\rA\otimes \one_\rB)\ketbra{\Phi}{\Phi}_{\rA\rB}(D_\rA\otimes \one_\rB)\right]\\
            &=\Tr_\rA\left[(D_\rA\otimes \one_\rB)\left(M^{a_x|x}_\rA\otimes\one_\rB\right)(D_\rA\otimes \one_\rB)\ketbra{\Phi}{\Phi}_{\rA\rB}\right]\\
            &=\Tr_\rA\left[\left(D_\rA M^{a_x|x}_\rA D_\rA\otimes\one_\rB\right)\ketbra{\Phi}{\Phi}_{\rA\rB}\right]\\
            &=\Tr_\rA\left[\left(\one_\rA\otimes\left(D_\rA M^{a_x|x}_\rA D_\rA\right)^\top\right)\ketbra{\Phi}{\Phi}_{\rA\rB}\right]\\
            &=\Tr_\rA\left[\left(\one_\rA\otimes \left(D_\rA M^{a_x|x\top}_\rA D_\rA\right)\right)\ketbra{\Phi}{\Phi}_{\rA\rB}\right]\\
            &=\frac{1}{d}D_\rA M^{a_x|x\top}_\rA D_\rA.
        \end{split}
    \end{equation}
    The transpose is with respect to the basis of the Schmidt decomposition.
    The third line uses the cyclicity property. In general, the cyclicity property of the trace cannot be applied to the partial trace. But it holds that \begin{equation}\label{eq:ptrace_cyclicity}
        \Tr_\rA[X_{\rA\rB}(Y_\rA\otimes\one_\rB)(Z_\rA\otimes\one_\rB)]=\Tr_\rA[(Z_\rA\otimes\one_\rB)X_{\rA\rB}(Y_\rA\otimes\one_\rB)]=\Tr_\rA[(Y_\rA\otimes\one_\rB)(Z_\rA\otimes\one_\rB)X_{\rA\rB}].
    \end{equation} See, e.g., \cite[Exercise 4.3.15]{wilde2017quantum}.
    The fifth line uses the well-known property of maximally entangled state: $(X_\rA\otimes\one_\rB)\ket{\Phi}_{\rA\rB}=(\one_\rA\otimes X_\rB^\top)\ket{\Phi}_{\rA\rB}$.
    
    Since $D_\rA$ is full-rank and diagonal, $D^{-1}_\rA$ exists and $D^{-1\top}_\rA=D^{-1}_\rA$.
    Thus, there is a relation
    \begin{equation}
        M^{a_x|x}_\rA=dD^{-1}_\rA\sigma^{a_x|x\top}_\rB D^{-1}_\rA\quad(\forall x\in\cX,\forall a_x\in\cA).
    \end{equation}
    Since it holds that $\sum_{a_x\in\cA}M^{a_x|x}_\rA=\one$ because $\{M^{a_x|x}_\rA\}_{a_x\in\cA}$ is a POVM, it holds that
    \begin{equation}\label{eq:assemblage_normalization}
        \begin{split}
            \sum_{a_x\in\cA}dD^{-1}_\rA\sigma^{a_x|x\top}_\rB D^{-1}&=dD^{-1}_\rA\left(\sum_{a_x\in\cA}\sigma^{a_x|x\top}_\rB\right)D^{-1}_\rA\\
            &=dD^{-1}_\rA\rho_B^\top D^{-1}_\rA\\
            &=\one_\rA
        \end{split}
    \end{equation}
    where $\rho_\rB:=\Tr_\rA\ketbra{\psi}{\psi}_{\rA\rB}=\sum_{a_x\in\cA}\sigma^{a_x|x}_\rB\;(\forall x\in\cX)$.
    
    Now, assume that $\{\sigma_\rB^{a_x|x}\}_{a_x\in\cA,x\in\cX}$ is star-unsteerable.
    Then, there are states $\{\rho^{a_xa_{x^*}|x}_\rB\}_{a_x,a_{x^*}\in\cA ,x\in\cX\setminus\{x^*\}}$ such that
    \begin{align}
        \sigma_\rB^{a_x|x}&=\sum_{a_{x^*}}\rho^{a_xa_{x^*}|x}_\rB\quad(\forall x\in\cX\setminus\{x^*\},\forall a_x\in\cA),\\
        \sigma^{a_{x^*}|x^*}_\rB&=\sum_{a_x\in\cA}\rho^{a_xa_{x^*}|x}_\rB\quad(\forall x\in\cX\setminus\{x^*\},\forall a_{x^*}\in\cA).
    \end{align}
    where
    \begin{equation}\label{eq:assemblage_condition}
        \sum_{a_x,a_{x^*}\in\cA}\rho^{a_xa_{x^*}|x}_\rB=\rho_\rB\quad(\forall x\in\cX\setminus\{x^*\}).
    \end{equation}
    
    Define the following operator
    \begin{equation}
        G^{a_xa_{x^*}|x}_\rA:=d D^{-1}_\rA\rho^{a_xa_{x^*}|x\top}_\rB  D^{-1}_\rA\quad(\forall x\in\cX\setminus\{x^*\},\forall a_x,a_{x^*}\in\cA).
    \end{equation}
    First, $G^{a_xa_{x^*}|x}_\rA\succeq\zero_\rA\;(\forall x\in\cX\setminus\{x^*\},\forall a_x,a_{x^*}\in\cA)$ because $\rho^{a_xa_{x^*}|x\top}_\rB\succeq\zero_\rB$.
    Also, we have
    \begin{equation}
        \begin{split}
            \sum_{a_xa_{x^*}\in\cA }G^{a_xa_{x^*}|x}_\rA&=dD^{-1}_\rA\left(\sum_{a_xa_{x^*}\in\cA }\rho^{a_xa_{x^*}|x\top}_\rB\right)D^{-1}_\rA\\
            &=dD^{-1}_\rA\rho^\top_\rB D^{-1}_\rA\\
            &=\one_\rA\quad(\forall x\in\cX).
        \end{split}
    \end{equation}
    The second line uses \eq{eq:assemblage_condition} and the third line uses \eqref{eq:assemblage_normalization}.
    Therefore, $\{G^{a_xa_{x^*}|x}_\rA\}_{a_xa_{x^*}\in\cA }$ is a POVM.
    Also, $\{G^{a_xa_{x^*}|x}_\rA\}_{a_xa_{x^*}\in\cA }$ satisfies
    \begin{align}
         \sum_{a_{x^*}\in\cA}G^{a_xa_{x^*}|x}_\rA&=\sum_{a_{x^*}\in\cA}dD^{-1}_\rA\rho^{a_xa_{x^*}|x\top}_\rB D^{-1}_\rA=dD^{-1}_\rA\sigma^{a_x|x\top}_\rB D^{-1}_\rA=M^{a_x|x}_\rA\quad(\forall x\in\cX\setminus\{x^*\},\forall a_x\in\cA),\\
        \sum_{a_x\in\cA}G^{a_xa_{x^*}|x}_\rA&=\sum_{a_x\in\cA}dD^{-1}_\rA\rho^{a_xa_{x^*}|x\top}_\rB D^{-1}_\rA=dD^{-1}_\rA\sigma^{a_{x^*}|x^*\top}_\rB D^{-1}_\rA=M^{a_{x^*}|x^*}_\rA\quad(\forall a_{x^*}\in\cA),
    \end{align}
    which means that $\mbM^{\cA|\cX}_\rA$ is star-compatible.
\end{proof}

\section{Universal lower bound on the guessing probability}
\subsection{The dual problem of Eq.~\eqref{eq:guessing_PMD}}
Define Lagrangian as follows:
\begin{equation}
    \begin{split}
        &L(X^{a_x|x}_\rA ,Y^{e|x}_\rA ,Z^{a_xe|x}_\rA;G^{a_xe|x}_\rA)\\
        &:=
        \begin{dcases}
            &\sum_{x\in\cX}\sum_{a_x,e\in\cA}\Tr[G^{a_xe|x}_\rA(\delta_{xx^*}\delta_{a_xe}\rho_\rA)]+\sum_{x\in\cX}\sum_{a_x\in\cA}\Tr\left[X^{a_x|x}_\rA \left(M^{a_x|x}_\rA -\sum_{e\in\cA}G^{a_xe|x}_\rA \right)\right]\\
            &+\sum_{x\in\cX}\sum_{e\in\cA}\Tr\left[Y^{e|x}_\rA \left(\sum_{a_x\in\cA}G^{a_xe|x}_\rA -\sum_{a_{x^*}\in\cA}G^{x^*}_{a_{x^*}e}\right)\right]+\sum_{x\in\cX}\sum_{a_x,e\in\cA}\Tr[Z^{a_xe|x}_\rA G^{a_xe|x}_\rA ]\quad(Z^{a_xe|x}_\rA\succeq\zero_\rA)\;,\\
            &+\infty\quad(\text{otherwise})
        \end{dcases}
    \end{split}
\end{equation}
where $X^{a_x|x}_\rA ,Y^{e|x}_\rA ,Z^{a_xe|x}_\rA\in\cL_\rH(\cH_\rA)$ are Lagrange multipliers, which are Hermitian operators.
Note that $G^{a_xe|x}_\rA$ is also just Hermitian here, while Eq.~\eqref{eq:guessing_PMD} has a constraint $G^{a_xe|x}_\rA\succeq\zero_\rA$. 
This is because our aim is to transform an optimization problem with constraints into an unconstrained one by incorporating the constraints into the definition of the Lagrangian.
Indeed, thanks to the condition that the Lagrangian is $+\infty$ when $Z^{a_xe|x}_\rA\nsucceq\zero_\rA$, when taking the infimum of the Lagrangian with respect to the Lagrange multipliers, $Z^{a_xe|x}_\rA\succeq\zero_\rA$ is imposed.
Under these circumstances, if $G^{a_xe|x}_\rA\nsucceq\zero_\rA$, taking the infimum of the Lagrangian with respect to $Z^{a_xe|x}_\rA$ results in $-\infty$.
Therefore, we get
\begin{equation}
    \inf_{X^{a_x|x}_\rA ,Y^{e|x}_\rA ,Z^{a_xe|x}_\rA}L(X^{a_x|x}_\rA ,Y^{e|x}_\rA ,Z^{a_xe|x}_\rA,G^{a_xe|x}_\rA)=
    \begin{dcases}
        \sum_{x\in\cX}\sum_{a_x,e\in\cA}\Tr[G^{a_xe|x}_\rA(\delta_{xx^*}\delta_{a_xe}\rho_\rA)]\quad&(G^{a_xe|x}_\rA\succeq\zero_\rA)\;,\\
        -\infty\quad&(\text{otherwise})\;,
    \end{dcases}
\end{equation}
which allows us to rewrite the primal problem Eq.~\eqref{eq:guessing_PMD} as follows:
\begin{equation}
    \sup_{G^{a_xe|x}_\rA}\inf_{X^{a_x|x}_\rA ,Y^{e|x}_\rA ,Z^{a_xe|x}_\rA}L(X^{a_x|x}_\rA ,Y^{e|x}_\rA ,Z^{a_xe|x}_\rA,G^{a_xe|x}_\rA).
\end{equation}

We obtain the dual problem by changing the order of inf and sup.
\begin{equation}
    \inf_{X^{a_x|x}_\rA ,Y^{e|x}_\rA ,Z^{a_xe|x}_\rA}\sup_{G^{a_xe|x}_\rA}L(X^{a_x|x}_\rA ,Y^{e|x}_\rA ,Z^{a_xe|x}_\rA,G^{a_xe|x}_\rA).
\end{equation}
To evaluate $\sup_{G^{a_xe|x}_\rA}L(X^{a_x|x}_\rA ,Y^{e|x}_\rA ,Z^{a_xe|x}_\rA,G^{a_xe|x}_\rA)$, let us rewrite the Lagrangian when $Z^{a_xe|x}_\rA\succeq\zero_\rA$ as follows:
\begin{equation}
    \begin{split}
        L(X^{a_x|x}_\rA ,Y^{e|x}_\rA ,Z^{a_xe|x}_\rA,G^{a_xe|x}_\rA )&=\sum_{x\in\cX}\sum_{a_x\in\cA}\Tr[X^{a_x|x}_\rA M^{a_x|x}_\rA ]\\
        &\quad+\sum_{x\in\cX}\sum_{a_x,e\in\cA}\Tr\left[\left(\delta_{xx^*}\delta_{a_xe}\rho_\rA -X^{a_x|x}_\rA +Y^{e|x}_\rA -\delta_{xx^*}\sum_{x'\in\cX}Y^{e|x'}_\rA +Z^{a_xe|x}_\rA \right)G^{a_xe|x}_\rA \right].
    \end{split}
\end{equation}
Here, again, $G^{a_xe|x}_\rA$ is Hermitian.
Thus, if 
\begin{equation}
    \delta_{xx^*}\delta_{a_xe}\rho_\rA -X^{a_x|x}_\rA +Y^{e|x}_\rA -\delta_{xx^*}\sum_{x'\in\cX}Y^{e|x'}_\rA +Z^{a_xe|x}_\rA \neq \zero_\rA\;,
\end{equation}
one can make $\sup_{G^{a_xe|x}_\rA}L(X^{a_x|x}_\rA ,Y^{e|x}_\rA ,Z^{a_xe|x}_\rA,G^{a_xe|x}_\rA)$ infinitely large.
Therefore, to have a non-trivial value, it must hold that
\begin{equation}\label{eq:dual_derivation}
    \delta_{xx^*}\delta_{a_xe}\rho_\rA -X^{a_x|x}_\rA +Y^{e|x}_\rA -\delta_{xx^*}\sum_{x'\in\cX}Y^{e|x'}_\rA +Z^{a_xe|x}_\rA =\zero_\rA.
\end{equation}
Note that now we set $Z^{a_xe|x}_\rA\succeq\zero_\rA$.
Combining this and \eq{eq:dual_derivation} leads to the constraint
\begin{equation}
    Z^{a_xe|x}_\rA=-\delta_{xx^*}\delta_{a_xe}\rho_\rA+X^{a_x|x}_\rA -Y^{e|x}_\rA+\delta_{xx^*}\sum_{x'\in\cX}Y^{e|x'}_\rA\succeq\zero_\rA.
\end{equation}
Finally, we get the dual problem.
\begin{equation}\label{eq:dual_guessing_PMD}
    \begin{split}
        \given \quad &\mbM^{\cA|\cX}_\rA,\rho_\rA\;,\\
        \minimize{X^{a_x|x}_\rA,Y^{e|x}_\rA}\quad&\sum_{x\in\cX}\sum_{a_x\in\cA}\Tr[X^{a_x|x}_\rA  M^{a_x|x}_\rA]\\
        \subto \quad&X^{a_x|x}_\rA -Y^{e|x}_\rA+\delta_{xx^*}\sum_{x'\in\cX}Y^{e|x'}_\rA\succeq\delta_{xx^*}\delta_{a_xe}\rho_\rA \quad(\forall x\in\cX,\;\forall a_{x},\;e\in\cA).
    \end{split}
\end{equation}

We see that the strong duality holds.
Indeed, by choosing $X^{a_x|x}_\rA=\alpha\one_\rA$ where $\alpha>1$ and $Y^{e|x}_\rA=\zero_\rA$, they are strict feasible solutions because it holds that $X^{a_x|x}_\rA\succ\delta_{xx^*}\delta_{a_xe}\rho_\rA$.
Also, the optimal solution of the primal problem Eq.~\eqref{eq:guessing_PMD} is at most $1$, which is finite.
Thus, from Slater's theorem~\cite{slater2014lagrange,boyd2004convex}, the optimal value $p^{x^*}_\guess (\mbM^{\cA|\cX}_\rA,\rho_\rA)$ of the primal problem Eq.~\eqref{eq:guessing_PMD} is equal to the optimal value of the dual problem Eq.~\eqref{eq:dual_guessing_PMD}, and the primal problem Eq.~\eqref{eq:guessing_PMD} has optimal solutions because the optimal value is finite by definition of the problem.

\subsection{Proof of Theorem~\ref{theorem:assemblage_PMD}}
\begin{proof}[Proof of Theorem~\ref{theorem:assemblage_PMD}]
    As confirmed in the previous subsection, Eq.~\eqref{eq:guessing_PMD} has an optimal solution.
    Then, take an optimal solution and denote it $\{G^{\circ a_xe|x}_\rA\}_{a_x,e\in\cA,x\in\cX}$ satisfying
    \begin{align}
        \sum_{e\in\cA}G^{\circ a_xe|x}_\rA&=M^{a_x|x}_\rA\quad(\forall x\in\cX,\forall a_x\in\cA)\\
        \sum_{a_x\in\cA}G^{\circ a_xe|x}_\rA&=\sum_{a_{x^*}\in\cA}G^{\circ a_{x^*}e|x^*}_\rA\quad(\forall x\in\cX,\forall e\in\cA)
    \end{align}
    Take a state $\rho_{\rA\rB}\in\cS(\cH_\rA\otimes\cH_\rB)$ such that $\Tr_\rB\rho_{\rA\rB}=\rho_\rA$ and define
    \begin{equation}
        \tilde{\tau}_\rB^{a_xe|x}:=\Tr_\rA[(G^{\circ a_xe|x}_\rA\otimes\one_\rB)\rho_{\rA\rB}].
    \end{equation}
    Now we confirm that $\tilde{\tau}_\rB^{a_xe|x}$ is a feasible solution of Eq.~\eqref{eq:guessing_assemblage}.
    First, for all $x\in\cX,a_x\in\cA$, it holds that 
    \begin{equation}
        \begin{split}
            \sum_{e\in\cA}\tilde{\tau}_\rB^{a_xe|x}&=\sum_{e\in\cA}\Tr_\rA[(G^{\circ a_xe|x}_\rA\otimes\one_\rB)\rho_{\rA\rB}]\\
            &=\Tr_\rA\left[\left(\sum_{e\in\cA}G^{\circ a_xe|x}_\rA\otimes\one_\rB\right)\rho_{\rA\rB}\right]\\
            &=\Tr_\rA\left[\left(M^{a|x}_\rA\otimes\one_\rB\right)\rho_{\rA\rB}\right]=\sigma _\rB ^{a_x|x}.
        \end{split}
    \end{equation}
    Thus, the first constraint of Eq.~\eqref{eq:guessing_assemblage} is satisfied.
    
    Next, for all $x\in\cX$ and for all $e\in\cA$, it holds that
    \begin{equation}
        \begin{split}
            \sum_{a_x\in\cA}\tilde{\tau}_\rB^{a_xe|x}&=\sum_{a_x\in\cA}\Tr_\rA[(G^{\circ a_xe|x}_\rA\otimes\one_\rB)\rho_{\rA\rB}]\\
            &=\Tr_\rA\left[\left(\sum_{a_x\in\cA}G^{\circ a_xe|x}_\rA\otimes\one_\rB\right)\rho_{\rA\rB}\right]\\
            &=\Tr_\rA\left[\left(\sum_{a_{x^*}\in\cA}G^{\circ a_{x^*}e|x^*}_\rA\otimes\one_\rB\right)\rho_{\rA\rB}\right]\\
            &=\sum_{a_{x^*}\in\cA}\Tr_\rA\left[\left(G^{\circ a_{x^*}e|x^*}_\rA\otimes\one_\rB\right)\rho_{\rA\rB}\right]\\
            &=\sum_{a_{x^*}\in\cA}\tilde{\tau}^{a_{x^*}e|x^*}_\rB.
        \end{split}
    \end{equation}
    Therefore, the second constraint is satisfied.
    
    Let us check the third constraint, positive-semidefiniteness of $\tilde{\sigma}_\rB^{a_xe|x}$.
    We can write the assemblage like
    \begin{equation}
        \begin{split}
            \tilde{\tau}_\rB^{a_xe|x}&=\Tr_\rA[(G^{\circ a_xe|x}_\rA\otimes\one_\rB)\rho_{\rA\rB}]\\
            &=\Tr_\rA\left[\left(\sqrt{G^{\circ a_xe|x}_\rA}\otimes\one_\rB\right)\rho_{\rA\rB}\left(\sqrt{G^{\circ a_xe|x}_\rA}\otimes\one_\rB\right)\right]
        \end{split}
    \end{equation}
    by using Eq.~\eqref{eq:ptrace_cyclicity}.
    Notice that
    \begin{equation}
        \left(\sqrt{G^{\circ a_xe|x}_\rA}\otimes\one_\rB\right)\rho_{\rA\rB}\left(\sqrt{G^{\circ a_xe|x}_\rA}\otimes\one_\rB\right)\succeq\zero_{\rA\rB}
    \end{equation}
    holds, and the partial trace preserves the positive semidefiniteness.
    Therefore, $\tilde{\tau}_\rB^{a_xe|x}\succeq\zero$.

    So far, we have confirmed that one can construct a feasible solution of Eq.~\eqref{eq:guessing_assemblage} from the optimal solution of Eq.~\eqref{eq:guessing_PMD}.
    In addition to this, it holds that
    \begin{equation}
        \begin{split}
            \sum_{e\in\cA}\Tr\tilde{\tau}^{ee|x^*}_\rB&=\sum_{e\in\cA}\Tr\left[\left(G^{\circ  ee|x^*}_\rA\otimes\one_\rB\right)\rho_{\rA\rB}\right]\\
            &=\sum_{e\in\cA}\Tr\left[G^{\circ  ee|x^*}_\rA\rho_\rA\right].
        \end{split}
    \end{equation}
    This means that the optimal value of Eq.~\eqref{eq:guessing_PMD} is the same as the value of Eq.~\eqref{eq:guessing_assemblage} for this feasible solution $\tilde{\tau}^{a_xe|x}_\rB$.
    However, since $\tilde{\sigma}_\rB^{a_xe}$ is not necessarily the optimal solution of Eq.~\eqref{eq:guessing_assemblage}, the desired inequality holds.
\end{proof}

\section{Star-incompatibility estimation from randomness}

\subsection{Semidefinite programming of star-incompatibility weight}
Define $s:=1-w$ in Definition~\ref{definition:weight}.
Then we consider the following problem:
\begin{equation}
    \begin{split}
        \given \quad &\mbM^{\cA|\cX}_\rA\quad \\
        \maximize{s\in\mathbb R,F^{a_x|x}_\rA,E^{a_x|x}_\rA}\quad &s\\
        \subto \quad & M^{a_x|x}_\rA=sF^{a_x|x}_\rA+(1-s)E^{a_x|x}_\rA\;(\forall x\in\cX,\forall a_x\in\cA),\\
        &\F^{\cA|x}_\rA,\E^{\cA|x}_\rA\in\sM^{\cA}_\rA\quad(\forall x\in\cX)\\
            \quad & \{\F^{\cA|x}_\rA\}_{x\in\cX}\text{ is star-compatible}.
    \end{split}
\end{equation}
and we have $W^{x^*}(\mbM^{\cA|\cX}_\rA)=1-s^*$ where $s^*$ is the optimal value of this problem.

Introduce $\tilde{F}^{a_x|x}_\rA:=sF^{a_x|x}_\rA$.
Then, when $s<1$ (the weight has a positive value), the problem becomes
\begin{equation}
    \begin{split}
        \given \quad &\mbM^{\cA|\cX}_\rA\quad \\
        \maximize{s\in\mathbb R,\tilde{F}^{a_x|x}_\rA,E^{a_x|x}_\rA,\tilde{G}^{a_xe|x}_\rA}\quad &s\\
        \subto \quad & E^{a_x|x}_\rA=\frac{1}{1-s}\left(M^{a_x|x}_\rA-\tilde{F}^{a_x|x}_\rA\right)\quad(\forall x\in\cX,\forall a_x\in\cA),\\
        &E^{a_x|x}_\rA\succeq\zero_\rA\quad(\forall x\in\cX,\forall a_x\in\cA),\\
        &\sum_{a_x\in\cA}E^{a_x|x}_\rA=\one_\rA\quad(\forall x\in\cX),\\
        &\sum_{e\in\cA}\tilde{G}^{a_xe|x}_\rA=\tilde{F}^{a_x|x}_\rA\quad(\forall x\in\cX,\forall a_x\in\cA),\\
        &\sum_{a_x\in\cA}\tilde{G}^{a_xe|x}_\rA=\tilde{F}^{e|x^*}_\rA\quad(\forall x\in\cX,\forall e\in\cA),\\
        &\sum_{a_x\in\cA}\tilde{F}^{a_x|x}_\rA=s\one_\rA\quad(\forall x\in\cX)\\
        &\tilde{F}^{a_x|x}_\rA\succeq\zero_\rA\quad(\forall x\in\cX,a_x\in\cA)\\
        &\tilde{G}^{a_xe|x}_\rA\succeq\zero_\rA\quad(\forall a_x,e\in\cA).
    \end{split}
\end{equation}
The first and second constraints can be combined and becomes $M^{a_x|x}_\rA-\tilde{F}^{a_x|x}_\rA\succeq\zero_\rA$.
We can remove the third constraint due to the sixth constraint.
The seventh constraint is derived from the fourth, fifth and eighth constraints.
Thus, the problem becomes the following SDP (although the case $s=1$ has been excluded from the derivation, it is included in the following formulation).
\begin{equation}\label{eq:1-weight}
    \begin{split}
        \given \quad &\mbM^{\cA|\cX}_\rA\quad \\
        \maximize{s\in\mathbb R,\tilde{F}^{a_x|x}_\rA,\tilde{G}^{a_xe|x}_\rA}\quad &s\\
        \subto \quad&\sum_{e\in\cA}\tilde{G}^{a_xe|x}_\rA=\tilde{F}^{a_x|x}_\rA\quad(\forall x\in\cX,\forall a_x\in\cA),\\        \quad&\sum_{a_x\in\cA}\tilde{G}^{a_xe|x}_\rA=\tilde{F}^{e|x^*}_\rA\quad(\forall x\in\cX,\forall e\in\cA),\\
        \quad & \sum_{a_x\in\cA}\tilde{F}^{a_x|x}_\rA=s\one_\rA\quad(\forall x\in\cX)\\
        \quad & M^{a_x|x}_\rA-\tilde{F}^{a_x|x}_\rA\succeq\zero_\rA\quad(\forall x\in\cX,\forall a_x\in\cA),\\
        \quad&\tilde{G}^{a_xe|x}_\rA\succeq\zero_\rA\quad(\forall x\in\cX,\forall a_x,e\in\cA).
    \end{split}
\end{equation}
Let us derive the dual problem.
Define a Lagrangian
\begin{equation}
    \begin{split}
        &L(X^{a_x|x}_\rA,Y^{e|x}_\rA,Z_\rA,R^{a_x|x}_\rA,S^{a_xe|x}_\rA;s,\tilde{F}^{a_x|x}_\rA,\tilde{G}^{a_xe|x}_\rA)\\
        &:=
        \begin{dcases}
            s+\sum_{x\in\cX}\sum_{a_x\in\cA}\Tr\left[X^{a_x|x}_\rA\left(\tilde{F}^{a_x|x}_\rA-\sum_{e\in\cA}\tilde{G}^{a_xe|x}_\rA\right)\right]+\sum_{x\in\cX}\sum_{e\in\cA}\Tr\left[Y^{e|x}_\rA\left(\tilde{F}^{e|x^*}_\rA-\sum_{a_x\in\cA}\tilde{G}^{a_xe|x}_\rA\right)\right]\\
            +\sum_{x\in\cX}\Tr\left[Z^x_\rA\left(\sum_{a_x\in\cA}\tilde{F}^{a_x|x}_\rA-s\one_\rA\right)\right]+\sum_{x\in\cX}\sum_{a_x\in\cA}\Tr\left[R^{a_x|x}_\rA\left(M^{a_x|x}_\rA-\tilde{F}^{a_x|x}_\rA\right)\right]\\
            +\sum_{x\in\cX}\sum_{a_x,e\in\cA}\Tr[S^{a_xe|x}_\rA\tilde{G}^{a_xe|x}_\rA]\quad(R^{a_x|x}_\rA,S^{a_xe|x}_\rA\succeq\zero_\rA),\\
            +\infty\quad(\text{otherwise}),
        \end{dcases}
    \end{split}
\end{equation}
where $X^{a_x|x}_\rA,Y^{e|x}_\rA,Z_\rA,R^{a_x|x}_\rA,S^{a_xe|x}_\rA,\tilde{F}^{a_x|x}_\rA,\tilde{G}^{a_xe|x}_\rA\in\cL_\rH(\cH_\rA)$ and $s\in\mathbb R$.
Then, the primal problem Eq.~\eqref{eq:1-weight} is
\begin{equation}
    \sup_{s\in\mathbb R,\tilde{F}^{a_x|x}_\rA,\tilde{G}^{a_xe|x}_\rA\in\cL_\rH(\cH_\rA)}\inf_{X^{a_x|x}_\rA,Y^{e|x}_\rA,Z_\rA,R^{a_x|x}_\rA,S^{a_xe|x}_\rA\in\cL_\rH(\cH_\rA)}L(X^{a_x|x}_\rA,Y^{e|x}_\rA,Z_\rA,R^{a_x|x}_\rA,S^{a_xe|x}_\rA;s,\tilde{F}^{a_x|x}_\rA,\tilde{G}^{a_xe|x}_\rA).
\end{equation}
The Lagrangian for $R^{a_x|x}_\rA,S^{a_xe|x}_\rA\succeq\zero_\rA$ can be rewritten as follows.
\begin{equation}
    \begin{split}
        &L(X^{a_x|x}_\rA,Y^{e|x}_\rA,Z_\rA,R^{a_x|x}_\rA,S^{a_xe|x}_\rA;s,\tilde{F}^{a_x|x}_\rA,\tilde{G}^{a_xe|x}_\rA)\\
        &=\sum_{x\in\cX}\sum_{a_x\in\cA}\Tr[R^{a_x|x}_\rA M^{a_x|x}_\rA]\\
        &+s\left(1-\sum_{x\in\cX}\Tr Z^x_\rA\right)\\
        &+\sum_{x\in\cX}\sum_{a_x\in\cA}\Tr\left[\left(X^{a_x|x}_\rA+Z^x_\rA-R^{a_x|x}_\rA+\delta_{xx^*}\sum_{x'\in\cX}Y^{a_x|x'}_\rA\right)\tilde{F}^{a_x|x}_\rA\right]\\
        &+\sum_{x\in\cX}\sum_{a_x,e\in\cA}\Tr[(S^{a_xe|x}_\rA-Y^{e|x}_\rA-X^{a_x|x}_\rA)\tilde{G}^{a_xe|x}_\rA].
    \end{split}
\end{equation}
To avoid $\sup_{s\in\mathbb R,\tilde{F}^{a_x|x}_\rA,\tilde{G}^{a_xe|x}_\rA\in\cL_\rH(\cH_\rA)}L(X^{a_x|x}_\rA,Y^{e|x}_\rA,Z_\rA,R^{a_x|x}_\rA,S^{a_xe|x}_\rA;s,\tilde{F}^{a_x|x}_\rA,\tilde{G}^{a_xe|x}_\rA)=+\infty$, we need
\begin{align}
    1-\sum_{x\in\cX}\Tr Z^x_\rA&=0,\\
    X^{a_x|x}_\rA+Z^x_\rA-R^{a_x|x}_\rA+\delta_{xx^*}\sum_{x'\in\cX}Y^{a_x|x'}_\rA&=\zero_\rA,\\
    S^{a_xe|x}_\rA-Y^{e|x}_\rA-X^{a_x|x}_\rA&=\zero_\rA.
\end{align}
Thus, the dual problem
\begin{equation}
    \inf_{X^{a_x|x}_\rA,Y^{e|x}_\rA,Z_\rA,R^{a_x|x}_\rA,S^{a_xe|x}_\rA\in\cL_\rH(\cH_\rA)}\sup_{s\in\mathbb R,\tilde{F}^{a_x|x}_\rA,\tilde{G}^{a_xe|x}_\rA\in\cL_\rH(\cH_\rA)}L(X^{a_x|x}_\rA,Y^{e|x}_\rA,Z_\rA,R^{a_x|x}_\rA,S^{a_xe|x}_\rA;s,\tilde{F}^{a_x|x}_\rA,\tilde{G}^{a_xe|x}_\rA)
\end{equation}
is written as
\begin{equation}\label{eq:1-weight_dual}
    \begin{split}
        \given\quad &\mbM^{\cA|\cX}_\rA\\
        \minimize{R^{a_x|x}_\rA,X^{a_x|x}_\rA,Y^{e|x}_\rA,Z^x_\rA\in\cL_{\rH}(\cH_\rA)}\quad &\sum_{x\in\cX}\sum_{a_x\in\cA}\Tr[R^{a_x|x}_\rA M^{a_x|x}_\rA]\\
        \subto\quad & \sum_{x\in\cX}\Tr Z^x_\rA=1,\\
        \quad & X^{a_x|x}_\rA+Z^x_\rA+\delta_{xx^*}\sum_{x'\in\cX}Y^{a_x|x'}_\rA=R^{a_x|x}_\rA\quad(\forall x\in\cX,\forall a_x\in\cA),\\
        \quad &X^{a_x|x}_\rA+Y^{e|x}_\rA\succeq\zero_\rA\quad(\forall x\in\cX,\forall a_x,e\in\cA),\\
        \quad & R^{a_x|x}_\rA\succeq\zero_\rA\quad(\forall x\in\cX,\forall a_x\in\cA).
    \end{split}
\end{equation}

One can construct strictly feasible solutions of this problem like $X^{a_x|x}_\rA=\alpha\one_\rA(\alpha>\frac{1}{d_\rA|\cX|}),Y^{a_x|x}_\rA=\zero_\rA,Z^x_\rA=\frac{\one_\rA}{d_\rA|\cX|}$.
Thus, from Slater's theorem~\cite{slater2014lagrange,boyd2004convex}, the primal problem Eq.~\eqref{eq:1-weight} and the dual problem Eq.~\eqref{eq:1-weight_dual} have the same optimal value, and the primal problem has optimal solutions because the optimal value is finite by definition of the problem.

\subsection{Proof of Theorem~\ref{theorem:weight}}
\begin{proof}[Proof of Theorem~\ref{theorem:weight}]
    Take optimal measurements as $\F^{\circ \cA|x}_\rA,\E^{\circ\cA|x}_\rA\in\sM^\cA_\rA$ and denote $\mbF^{\circ\cA|\cX}_\rA:=\{\F^{\circ \cA|x}_\rA\}_{x\in\cX}$ and $\mbE^{\circ\cA|\cX}_\rA:=\{\E^{\circ \cA|x}_\rA\}_{x\in\cX}$.
    Here, $\mbF^{\circ\cA|\cX}_\rA:=\{\F^{\circ \cA|x}_\rA\}_{x\in\cX}$ is star-compatible.
    Then, there are POVMs $\{\J^{\cA\times\cA|x}_\rA\}_{x\in\cX\setminus\{x^*\}}$ where $\J^{\cA\times\cA|x}_\rA:=\{J^{a_xe|x}_\rA\}\in\sM^{\cA\times\cA}_\rA$ such that 
    \begin{align}
        \sum_{e\in\cA}J^{a_xe|x}_\rA&=F^{\circ a_x|x}_\rA\quad(\forall x\in\cX\setminus\{x^*\},\forall a_x\in\cA),\\
        \sum_{a_x\in\cA}J^{a_xe|x}_\rA&=F^{\circ e|x^*}_\rA\quad(\forall x\in\cX\setminus\{x^*\},\forall e\in\cA).
    \end{align}
    Construct a set of POVM $\{\G'^{\cA\times\cA|x}_\rA\}_{x\in\cX}$ where each element of a POVM $\G'^{\cA\times\cA|x}_\rA=\{G'^{a_xe|x}_\rA\}_{a_x,e\in\cA}$ is defined as
    \begin{equation}
        G'^{a_xe|x}_\rA:=
        \begin{dcases}
            J^{a_xe|x}_\rA\quad &(x\neq x^*),\\
            \delta_{a_{x^*}e}F^{\circ a_{x^*}|x^*}_\rA\quad &(x=x^*).
        \end{dcases}
    \end{equation}
    This is a feasible solution of Eq.~\eqref{eq:guessing_PMD} and achieves the optimal solution $p^{x^*}_\guess (\mbF^{\circ \cA|\cX}_\rA,\rho_\rA)=1$.
    
    Also, take POVMs $\{\G''^x_\rA\}_{x\in\cX}$ as a optimal solution of $p^{x^*}_\guess (\mbE^{\circ \cA|\cX}_\rA,\rho_\rA)$ where $\G''^x_\rA:=\{G''^{a_xe|x}_\rA\}\in\sM^{\cA\times\cA}_\rA$ such that 
    \begin{align}
        \sum_{e\in\cA}G''^{a_xe|x}_\rA&=E^{\circ a_x|x}_\rA\quad(\forall x\in\cX,\forall a_x\in\cA)),\\
        \sum_{a_x\in\cA}G''^{a_xe|x}_\rA&=\sum_{a_{x^*}\in\cA}G''^{a_{x^*}e|x^*}_\rA\quad(\forall x\in\cX,\forall e\in\cA).
    \end{align}
    Now, define
    \begin{equation}
    \begin{split}
        G^{a_xe|x}_\rA&:=\{1-W^{x^*}(\mbM^{\cA|\cX}_\rA)\}G'^{a_xe|x}_\rA+W^{x^*}(\mbM^{\cA|\cX}_\rA)G''^{a_xe|x}_\rA\quad(\forall x\in\cX,\forall a_x,e\in\cA).
    \end{split}
    \end{equation}
    One can confirm that
    \begin{equation}
    \begin{split}
        \sum_{e\in\cA}G^{a_xe|x}_\rA&=\{1-W^{x^*}(\mbM^{\cA|\cX}_\rA)\}\sum_{e\in\cA}G'^{a_xe|x}_\rA+W^{x^*}(\mbM^{\cA|\cX}_\rA)\sum_{e\in\cA}G''^{a_xe|x}_\rA\\
        &=\{1-W^{x^*}(\mbM^{\cA|\cX}_\rA)\}F^{\circ a_x|x}_\rA+W^{x^*}(\mbM^{\cA|\cX}_\rA)E^{\circ a_x|x}_\rA\\
        &=M^{a_x|x}_\rA\quad(\forall x\in\cX,\forall a_x\in\cA),
    \end{split}
    \end{equation}
    and
    \begin{equation}
        \begin{split}
            \sum_{a_x\in\cA}G^{a_xe|x}_\rA&:=\{1-W^{x^*}(\mbM^{\cA|\cX}_\rA)\}\sum_{a_x\in\cA}G'^{a_xe|x}_\rA+W^{x^*}(\mbM^{\cA|\cX}_\rA)\sum_{a_x\in\cA}G''^{a_xe|x}_\rA\\
        &=\{1-W^{x^*}(\mbM^{\cA|\cX}_\rA)\}F^{\circ e|x^*}_\rA+W^{x^*}(\mbM^{\cA|\cX}_\rA)\sum_{a_{x^*}\in\cA}G''^{a_{x^*}e|x^*}_\rA\quad(\forall x\in\cX,\forall e\in\cA).
        \end{split}
    \end{equation}
    Note that the right-hand side does not depend on $x$.
    Therefore, $G^{a_xe|x}_\rA$ is a feasible solution of Eq.~\eqref{eq:guessing_PMD} that calculates $p^{x^*}_\guess (\mbM^{\cA|\cX}_\rA,\rho_\rA)$.
    Thus, we have
    \begin{equation}
    \begin{split}
        p^{x^*}_\guess (\mbM^{\cA|\cX}_\rA,\rho_\rA)&\ge \sum_{e\in\cA}\Tr[G^{ee|x^*}_\rA\rho_\rA]\\
        &=\{1-W^{x^*}(\mbM^{\cA|\cX}_\rA)\}\sum_{e\in\cA}\Tr[G'^{ee|x^*}_\rA\rho_\rA]+W^{x^*}(\mbM^{\cA|\cX}_\rA)\sum_{e\in\cA}\Tr[G''^{ee|x^*}_\rA\rho_\rA]\\
        &=1-W^{x^*}(\mbM^{\cA|\cX}_\rA)+W^{x^*}(\mbM^{\cA|\cX}_\rA)p^{x^*}_\guess (\mbE^{\cA|\cX}_\rA,\rho_\rA)\\
        &\ge 1-W^{x^*}(\mbM^{\cA|\cX}_\rA)+\frac{W^{x^*}(\mbM^{\cA|\cX}_\rA)}{|\cA|}\\
        &=1-W^{x^*}(\mbM^{\cA|\cX}_\rA)\left(1-\frac{1}{|\cA|}\right).
    \end{split}
    \end{equation}
    The first inequality comes from the fact that $G^{a_xe|x}_\rA$ is a feasible solution but not necessarily optimal.
    The second inequality uses the fact that $p^{x^*}_\guess (\mbN^{\cA|\cX}_\rA,\rho_\rA)$ is greater than or equal to $1/|\cA|$ because wild guessing is always possible.
    
    By combining this inequality with Theorem~\ref{theorem:assemblage_PMD}, we get
    \begin{equation}
        p_\guess^{x^*}(\sigma(\mbM^{\cA|\cX}_\rA,\rho_{\rA\rB}))\ge  p^{x^*}_\guess (\mbM^{\cA|\cX}_\rA,\rho_\rA)\ge 1-W^{x^*}(\mbM^{\cA|\cX}_\rA)\left(1-\frac{1}{|\cA|}\right),
    \end{equation}
    which completes the proof.
\end{proof}

\end{document}